\def\ps@pprintTitle{%
  \let\@oddhead\@empty
  \let\@evenhead\@empty
  \let\@oddfoot\@empty
  \let\@evenfoot\@oddfoot
}
\journal{Journal of Computational Physics}
\newtheorem{thm}{Theorem}
\newtheorem{lem}[thm]{Lemma}
\newtheorem{prop}[thm]{Proposition}
\newtheorem{cor}[thm]{Corollary}
\newtheorem{remark}[thm]{Remark}
\newcommand{\tr}{\mathrm{tr}}        
\newcommand{\diag}{\mathrm{diag}}        
\newcommand{\dd}{:} 
\newcommand{\tp}{^{T}} 
\DeclareMathOperator*{\argmin}{{\arg \min}}
\newcommand{\trid}{
	\mathrel{\stackunder[-2pt]{\stackon[-2pt]{$\cdot$}{$\cdot$}}{$\cdot$}}}
\newcommand{\va}{\mathbf{a}}
\newcommand{\vb}{\mathbf{b}}
\newcommand{\vp}{\mathbf{p}}
\newcommand{\vq}{\mathbf{q}}
\newcommand{\ve}{\mathbf{e}}
\newcommand{\vQ}{\mathbf{Q}}
\newcommand{\vr}{\mathbf{r}}
\newcommand{\valpha}{\bm{\alpha}}
\newcommand{\vm}{\mathbf{m}}
\newcommand{\vn}{\mathbf{n}}
\newcommand{\vx}{\mathbf{x}}
\newcommand{\vLambda}{\mathbf{\Lambda}}
\newcommand{\vXi}{\mathbf{\Xi}}
\newcommand{\vPi}{\mathbf{\Pi}}
\newcommand{\vA}{\mathbf{A}}
\newcommand{\vC}{\mathbf{C}}
\newcommand{\vD}{\mathbf{D}}
\newcommand{\vE}{\mathbf{E}}
\newcommand{\vP}{\mathbf{P}}
\newcommand{\vI}{\mathbf{I}}
\newcommand{\vR}{\mathbf{R}}
\newcommand{\vT}{\mathbf{T}}
\newcommand{\vH}{\mathbf{H}}
\newcommand{\vzero}{\mathbf{0}}
\newcommand{\Om}{\Omega}
\newcommand{\Gm}{\Gamma}
\newcommand{\iO}{\int_{\Omega}}
\newcommand{\iG}{\int_{\Gamma}}
\newcommand{\Sp}{S^{2}}  
\newcommand{\iSp}{\int_{\Sp}}
\newcommand{\dA}[1]{\, dS(#1)}
\newcommand{\Prob}{\mathcal{P}}   
\newcommand{\M}{\mathbb{M}}   
\newcommand{\Sh}{\mathbb{S}_h}   
\newcommand{\V}{\mathbb{V}}   
\newcommand{\R}{\mathbb{R}}   
\newcommand{\dt}{\delta t}
\newcommand{\ipOm}[2]{\left( #1, #2 \right)}        
\newcommand{\ebar}{\bar{\varepsilon}}
\newcommand{\ea}{\varepsilon_{\mathrm{a}}}
\newcommand{\etens}{\bm{\varepsilon}}
\newcommand{\Tk}{\mathcal{T}}
\newcommand{\Pk}{\mathcal{P}}
\newcommand{\interp}{\mathcal{I}_{h}} 
\newcommand{\symmtraceless}{\vXi}
\newcommand{\basis}{\vE}
\newcommand{\ELdG}{E}
\newcommand{\ELdGfunc}{\mathcal{W}}
\newcommand{\aform}[2]{a \left( #1 , #2 \right)}
\newcommand{\bform}[2]{b \left( #1 , #2 \right)}
\newcommand{\cform}[3]{c \left( #1 , #2 ; #3 \right)}
\newcommand{\LdGspace}[1]{\V (#1)}
\newcommand{\LdGsurf}{f_{\Gm}}
\newcommand{\surfcoef}{\eta_{\Gm}}
\newcommand{\LdGrhs}{\chi}
\newcommand{\Li}{L}
\newcommand{\levi}{e}
\newcommand{\bflevi}{\ve}
\newcommand{\Gmdir}{\Gamma_{D}}
\newcommand{\vQdir}{\vQ_{D}}
\newcommand{\vQinit}{\vQ_{0}}
\newcommand{\vQgam}{\vQ_{\Gm}}
\newcommand{\AdmisLdG}{\mathcal{A}}   
\newcommand{\Bulkfunc}{\psi}
\newcommand{\fbulk}{f}
\newcommand{\bulkimp}{\psi_{c}}
\newcommand{\bulkexp}{\psi_{e}}
\newcommand{\bulkeps}{\epsilon}
\newcommand{\bulkA}{A}
\newcommand{\bulkB}{B}
\newcommand{\bulkC}{C}
\newcommand{\bulkK}{K}
\newcommand{\entropy}[1]{S[#1]}
\newcommand{\MS}[1]{I_{\mathrm{MS}}[#1]}
\newcommand{\partition}[1]{Z(#1)}
\newcommand{\Eval}[2]{\mathbb{E}_{#1} [#2]}
\newcommand{\Admisrho}[1]{\mathcal{A}_{#1}}   
\newcommand{\Lagr}[2]{L \left[ #1, #2 \right]}   
\newcommand{\Constr}[1]{\vC \left[ #1 \right]}   
\newcommand{\dual}[1]{W \left( #1 \right)}   
\begin{document}

\begin{frontmatter}


\title{Numerical method for the equilibrium configurations of a
Maier-Saupe bulk potential in a Q-tensor model of an anisotropic nematic 
liquid crystal}


\author{Cody D. Schimming\fnref{CSfootnote}}
\address{School of Physics and Astronomy \\
  University of Minnesota, 
  Minneapolis, MN 55455}
\ead{schim111@umn.edu}
\fntext[CSfootnote]{C. D. Schimming acknowledges financial support by NSF DMR-1838977.}

\author{Jorge Vi\~{n}als\fnref{JVfootnote}}
\address{School of Physics and Astronomy \\
  University of Minnesota,
  Minneapolis, MN 55455}
\fntext[JVfootnote]{J. Vi\~{n}als acknowledges financial support by NSF DMR-1838977.}

\author{Shawn W.~Walker\corref{corrauthor}}
\address{Department of Mathematics\\
Center for Computation and Technology (CCT)\\
Louisiana State University,\\
Baton Rouge, LA 70803
Tel.: +1-225-578-1603\\
Fax: +1-225-578-4276
}
\cortext[corrauthor]{Corresponding author; S. W. Walker acknowledges financial support by the NSF DMS-1555222 (CAREER).}
\ead{walker@math.lsu.edu}


%

\begin{abstract}

We present a numerical method, based on a tensor order parameter description of
a nematic phase, that allows fully anisotropic elasticity. Our method thus
extends the Landau-de Gennes $\vQ$-tensor theory to anisotropic phases. A
microscopic model of the nematogen is introduced (the Maier-Saupe potential in
the case discussed in this paper), combined with a constraint on eigenvalue
bounds of $\vQ$. This ensures a physically valid order parameter $\vQ$ (i.e.,
the eigenvalue bounds are maintained), while allowing for more general gradient
energy densities that can include cubic nonlinearities, and therefore elastic
anisotropy. We demonstrate the method in two specific two dimensional
examples in which the Landau-de Gennes model including elastic anisotropy is
known to fail, as well as in three dimensions for the cases of a hedgehog point
defect, a disclination line, and a disclination ring. The details of the 
numerical implementation are also discussed.

\end{abstract}

\begin{keyword}
liquid crystals \sep defects \sep Landau-de Gennes \sep finite element method \sep singular bulk potential
\MSC[2010] 65N30 \sep 49M25 \sep 35J70
\end{keyword}

\end{frontmatter}

\section{Introduction}\label{sec:intro}


Liquid crystals (LCs) are a critical material for emerging technologies \cite{deGennes_book1995,Lagerwall_CAP2012}.  Their response to optical \cite{Blinov_book1983,Goodby_inbook2012,Sun_SMS2014,Hoogboom_RSA2007,Dasgupta_MRE2015}, electric/magnetic \cite{Brochard_JPhysC1975,Buka_book2012,Shah_Small2012}, and mechanical actuation \cite{Zhu_PRE2011,LC_Elastomers_book2012,Biggins_JMPS2012,Resetic_NC2016} has already yielded various devices, e.g. electronic shutters \cite{Heo_AA2015}, novel types of lasers \cite{Humar_OE2010,Coles_NP2010}, dynamic shape control of elastic bodies \cite{Camacho-Lopez_NM2004,Ware_Science2015}, and others \cite{Musevic2011,Lopez-Leon_CPS2011,Copar_PNAS2015,Whitmer_PRL2013,Wang_NL2014}. Furthermore, in the emerging field of active matter, self propulsion often leads to nematic order, both because of the broken symmetry in motion induced by the constituent particles, and because the elongated particles themselves promote liquid crystalline ordering \cite{Shankar_PRX2019,Shankar_sub2020}. Fruitful connections are being found with such disparate areas of Biology as rearrangements in confluent epithelial tissue \cite{Saw_Nat2017}, neural stem cell cultures \cite{Kawaguchi_Nat2017}, or cellular motors comprising microtubule bundles and kinesin complexes \cite{Sanchez_Nat2012}.

LCs are a \emph{meso-phase} of matter in which its ordered macroscopic state is \emph{between} a spatially disordered liquid, and a fully crystalline solid \cite{Virga_book1994}.  In their nematic phase, in which long ranged orientational order exists, the Landau-de Gennes (LdG) theory introduces a \emph{tensor-valued} function $\vQ$ to describe local order in the LC material.  In particular, the eigenframe of $\vQ$ yields information about the statistics of the distribution of LC molecule orientations.  The energy functional of $\vQ$ that describes the LC material involves both a bulk potential, and an elastic contribution involving the derivatives of $\vQ$.

Unlike the related description of a LC nematic phase in terms of a director, the
analysis based on $\vQ$ is generally limited to the so called one constant
approximation, appropriate for an elastically isotropic phase. In this case, the
Landau-de Gennes energy is supplemented by a gradient term of the form $L_{1}
|\nabla \vQ |^{2}$, where $L_{1}$ is the elastic constant. Inclusion of elastic
anisotropy requires gradient terms at least of third order in $\vQ$.
Unfortunately, at this order, the energy is known to become unbounded for any
choice of parameters \cite{Ball_MCLC2010,Bauman_CVPDE2016}. Therefore, in
principle, the requirement of a stable energy would imply consideration of terms
at least of fourth order in gradients. Since there are 22 possible fourth order
invariants allowed by symmetry \cite{Longa_LC1987}, the Landau-de Gennes theory
becomes overly complex for anisotropic systems.

This paper develops an alternative method to the LdG model that uses a special type of \emph{singular} bulk potential, the so-called Ball-Majumdar potential \cite{Ball_MCLC2010}.  This potential has the following desirable properties: (i), it can be derived from a microscopic interaction potential by using the tools of  statistical mechanics; and (ii), it enforces that $\vQ$ has physically permissible eigenvalues.  However, this choice of potential introduces some novel difficulties that are not present in the standard LdG model, chief among them is that in the implementation the energy as a function of $\vQ$ does not have a closed form, rather it needs to be evaluated entirely numerically. This is analogous to other self consistent field theories as applied, for example, to polymers \cite{Fredrickson_book2006}. Many numerical methods and implementations already exist for the standard LdG model, e.g. \cite{Bajc_JCP2016,Davis_SJNA1998,Ravnik_LC2009,Bartels_bookch2014,BorthNochettoWalker_NM2020,Lee_APL2002,Zhao_JSC2016}. However,  numerical methods for the Ball-Majumdar potential have been given only recently \cite{Schimming_PRE2020a,Schimming_PRE2020b}. In this paper we formalize this prior work, contrast its results with the LdG model in cases in which the latter fails, and also show the power of the method by computing defect configurations in three spatial dimensions.



\section{Liquid Crystal Theory}

We briefly review in this section the Landau-de Gennes theory for a nematic phase, as well as the more microscopic Maier-Saupe theory. Consider an anisotropic LC molecule which is uniaxial, with orientation described by the unit vector $\vp$. The Maier-Saupe potential between two molecules $i$ and $j$ is a contact interaction of the form $-\kappa \left( (\vp_{i} \cdot \vp_{j})^{2} -1/3) \right)$, where $\kappa$ is the interaction constant \cite{deGennes_book1995}. In the isotropic phase, the thermal average of $\vp$ is zero, while it is nonzero in the nematic phase. The Landau-de Gennes theory of a nematic phase is formulated instead in terms of a mesoscopic order parameter, the symmetric, traceless tensor $\vQ$. In the isotropic phase $\vQ = \vzero$. In the nematic phase, $\vQ$ is nonzero. A uniaxial nematic phase corresponds to two of the eigenvalues of $\vQ$ being equal, and a biaxial phase to the general case. Note that although the molecules themselves are uniaxial, the distribution of local orientations may itself be uniaxial or biaxial.

\subsection{Landau-de Gennes Theory}

Let $\symmtraceless$ be the set of symmetric, traceless $3 \times 3$ matrices.  Given a tensor-valued function $\vQ : \Om \to \symmtraceless$, where $\Om$ is a physical domain with Lipschitz boundary $\Gamma$, the free energy of the LdG model is defined as \cite{Mori_JJAP1999,Mottram_arXiv2014}:
\begin{equation}\label{eqn:Landau-deGennes_model_problem}
\begin{split}
  \ELdG [\vQ] &:= \iO \ELdGfunc (\vQ,\nabla \vQ) \, d\vx +
  \frac{1}{\bulkeps^2}
  \iO \Bulkfunc (\vQ) \, d\vx \\
&\qquad + \surfcoef \iG \LdGsurf(\vQ) \, dS(\vx) - \iO \LdGrhs(\vQ) \, d\vx,
\end{split}
\end{equation}
with
\begin{equation}\label{eqn:Landau-deGennes_energy_density}
\begin{split}
\ELdGfunc(\vQ,\nabla \vQ) &:= \frac{1}{2} \Big{(} \Li_{1} |\nabla \vQ|^2 + \Li_{2} |\nabla \cdot \vQ|^2 + \Li_{3} (\nabla \vQ)\tp \trid \nabla \vQ , \\
&\qquad\qquad + \Li_{4} \nabla \vQ \trid (\bflevi \cdot \vQ) + \Li_{*} \nabla \vQ \trid [(\vQ \cdot \nabla) \vQ] \Big{)},
\end{split}
\end{equation}
where $\{ \Li_{i} \}_{i=1}^{4}$, $\Li_*$, are material dependent elastic constants, and 
\begin{equation}\label{eqn:Landau-deGennes_invariants}
\begin{split}
|\nabla \vQ|^2 := (\partial_{k} Q_{ij})^2, \quad 
|\nabla \cdot \vQ|^2 := (\partial_{j} Q_{ij})^2, \quad (\nabla \vQ)\tp \trid \nabla \vQ &:= (\partial_{j} Q_{ik}) (\partial_{k} Q_{ij}), \\
\nabla \vQ \trid (\bflevi \cdot \vQ) := \levi_{j k l} Q_{ji} \partial_{l} Q_{ki}, \quad \nabla \vQ \trid [(\vQ \cdot \nabla) \vQ] &:= Q_{lk} (\partial_{l} Q_{ij}) (\partial_{k} Q_{ij}).
\end{split}
\end{equation}
We use the convention of summation over repeated indices.  Energies in Eq. \eqref{eqn:Landau-deGennes_model_problem} are made dimensionless by writing them in units of the temperature, $T$, while lengths are scaled by a characteristic length $\xi$.  The value of the dimensionless parameter $\bulkeps^2 \equiv L_{1} / (T \xi^2)$ determines the relative weight of the gradient dependent energy $\ELdGfunc(\vQ,\nabla \vQ)$ to the bulk potential $\Bulkfunc(\vQ)$ and thus determines $\xi$ \cite{Gartland_MMA2018}.  All five elastic constants can be related to the five independent constants of the Oseen-Frank model (i.e. $K_1$, $K_2$, $K_3$, $K_4$, and the twist $q_0$) \cite{Mori_JJAP1999,Mottram_arXiv2014}.  Indeed, $\Li_{4}$ accounts for twist and $\Li_{*}$ is needed to have five independent constants.  Note that taking $\Li_{i} = 0$, for $i=2,3,4$, and $\Li_{*} = 0$ gives the one constant LdG model.  More complicated models can also be considered \cite{Mottram_arXiv2014,deGennes_book1995,Sonnet_book2012}.  The bulk potential $\Bulkfunc (\vQ)$ is discussed in the next subsection.

The surface energy $\LdGsurf(\vQ)$, with parameter $\surfcoef \geq 0$, accounts for \emph{weak anchoring} of the LC (i.e. penalization of boundary conditions).  For example, a Rapini-Papoular type anchoring energy \cite{Barbero_JPF1986} can be considered:
\begin{equation}\label{eqn:Landau-deGennes_surf_energy}
\begin{split}
\LdGsurf (\vQ) = \frac{1}{2} \tr \left( \vQ - \vQgam \right)^2 \equiv \frac{1}{2} |\vQ - \vQgam|^2,
\end{split}
\end{equation}
where $\vQgam(\vx) \in \symmtraceless$ for all $\vx \in \Gm$.

The function $\LdGrhs(\cdot)$ accounts for interactions with external fields (e.g., an electric field).  For example, the energy density of a dielectric LC with fixed boundary potential is given by $-1/2 \, \vD \cdot \vE$ \cite{Walker_arXiv2018}, where the electric displacement $\vD$ is related to the electric field $\vE$ by the linear constitutive law \cite{Feynman_Lectures1964,deGennes_book1995,Biscari_CMT2007}:
\begin{equation}\label{eqn:LC_dielectric}
\vD = \etens \vE = \ebar \vE + \ea \vQ \vE, \quad \etens(\vQ) = \ebar \vI + \ea \vQ,
\end{equation}
where $\etens$ is the LC material's dielectric tensor and $\ebar$, $\ea$ are constitutive dielectric permittivities.  Thus, in the presence of an electric field, $\LdGrhs(\cdot)$ becomes
\begin{equation}\label{eqn:LC_electric_energy_functional}
\LdGrhs (\vQ) = -\frac{1}{2} \vD \cdot \vE = -\frac{1}{2} \left[ \ebar |\vE|^2 + \ea \vE \cdot \vQ \vE \right].
\end{equation}

\subsection{Landau-de Gennes bulk potential}\label{sec:bulk_potential}

The bulk potential $\Bulkfunc$ is a double-well type of function that is given by
\begin{equation}\label{eqn:Landau-deGennes_bulk_potential}
\begin{split}
\Bulkfunc (\vQ) = \bulkK - \frac{\bulkA}{2} \tr (\vQ^2) - \frac{\bulkB}{3} \tr (\vQ^3) + \frac{\bulkC}{4} \left( \tr (\vQ^2) \right)^2.
\end{split}
\end{equation}
Above, $\bulkA$, $\bulkB$, $\bulkC$ are material parameters such that $\bulkA$, $\bulkB$, $\bulkC$ are positive; $\bulkK$ is a convenient constant to ensure $\Bulkfunc \geq 0$.  Stationary points of $\Bulkfunc$ are either uniaxial or isotropic $\vQ$-tensors \cite{Majumdar_EJAM2010}.  

This potential was introduced to describe the vicinity of the isotropic-nematic phase transition, which is weakly first order. Therefore the eigenvalues of $\vQ$ are small. However, the same potential is used to describe systems deep inside the nematic phase, while not providing for any constraint on the eigenvalues. It is known that they can leave their physically admissible range in some circumstances. For example, consideration of an elastically anisotropic phase $K_{1} \neq K_{3}$ requires that $L_{*} \neq 0$. In this case, the energy $E[\vQ]$ is unbounded below for any choice of physical parameters \cite{Ball_MCLC2010,Bauman_CVPDE2016}, a divergence that is related to the absence of a constraint on the eigenvalues. The computational approach that we present here is precisely designed to remedy this problem.

\section{Self Consistent Mean Field Theory}\label{sec:MS_potential}

\subsection{Macroscopic order parameter}

We review the singular bulk potential introduced in \cite{Katriel_LC1986,Ball_MCLC2010}.  The goal is to have a bulk potential that correctly controls the eigenvalues of $\vQ \in \symmtraceless$, where $\symmtraceless$ is the set of symmetric, traceless $3 \times 3$ matrices.  Note that $\symmtraceless$ is spanned by a set of five basis matrices $\{ \basis^{k} \}_{k=1}^{5}$ \cite{Gartland_MCLC1991}.

The first step is to introduce a definition of the macroscopic order parameter (or mesoscopic field, under the assumption of local equilibrium), given by
\begin{equation}\label{eqn:defn_of_Q}
\begin{split}
\vQ = \iSp \left( \vp \otimes \vp - \frac{1}{3} \vI \right) \rho(\vp) \dA{\vp},
\end{split}
\end{equation}
where $\rho \in \Prob$ is the equilibrium probability distribution of the LC molecules given by statistical mechanics, i.e.
\begin{equation}\label{eqn:prop_meas_sphere}
\begin{split}
\Prob := \left\{ \rho \in L^1(\Sp; \R) \mid ~ \rho \geq 0, \quad \iSp \rho(\vp) \dA{\vp} = 1 \right\}.
\end{split}
\end{equation}
Note that $\vQ$ as defined is a thermal average. Therefore the minimization discussed in Sec. \ref{sec:SCFT_free_energy} at fixed $\vQ$ needs to be understood in a mean field sense. Note also that in the case of a non uniform configuration, we will assume that the same definition is valid so that an order parameter field $\vQ(\vx)$ is defined from the local distribution $\rho(\vp,\vx)$.

Equation \eqref{eqn:defn_of_Q} implies that the eigenvalues of $\vQ$, denoted $\lambda_{i} \equiv \lambda_{i}(\vQ)$, satisfy
\begin{equation}\label{eqn:Q_eigenvalue_bounds}
\begin{split}
-\frac{1}{3} \leq \lambda_{i} (\vQ) \leq \frac{2}{3}, ~\text{ for } i = 1,2,3, \qquad \sum_{i=1}^{3} \lambda_{i}(\vQ) = 0.
\end{split}
\end{equation}
In numerical work involving the Landau-de Gennes energy, equilibrium configurations of $\vQ$ are obtained by energy minimization, where the energy functional $E(\vQ)$ is \emph{independent} of any probability distribution of the underlying orientation of uniaxial molecules.  In other words, \eqref{eqn:Q_eigenvalue_bounds} is not guaranteed.  In contrast, the potential function defined below in Eq. \eqref{eqn:Maier-Saupe_pot} provides an energetic penalty so that the eigenvalues of $\vQ$ satisfy the bounds in \eqref{eqn:Q_eigenvalue_bounds}.

\subsection{Self-consistent free energy} 
\label{sec:SCFT_free_energy}

Let us define the entropy functional
\begin{equation}\label{eqn:entropy_functional}
\begin{split}
\entropy{\rho} = \iSp \rho(\vp) \ln \rho (\vp) \dA{\vp},
\end{split}
\end{equation}
and the intermolecular interaction kernel
\begin{equation}\label{eqn:intermolecular_kernel}
\begin{split}
K[\rho,\eta] = \iSp \iSp \left[ (\vp \cdot \vq)^2 - \frac{1}{3} \right] \rho(\vp) \eta(\vq) \dA{\vp} \dA{\vq},
\end{split}
\end{equation}
where $\rho$ and $\eta$ are two probability distribution functions in ${\cal P}$. The Maier-Saupe Potential is defined as
\begin{equation}\label{eqn:Maier-Saupe_pot}
\begin{split}
\MS{\rho} = T \entropy{\rho} - \kappa K[\rho,\rho],
\end{split}
\end{equation}
where $T > 0$ is temperature, and $\kappa > 0$ is a constant (we have omitted the Boltzmann constant $k_{B}$). With this definition, $I_{MS}$ reduces to the thermodynamic free energy when the distribution $\rho$ is the corresponding equilibrium probability distribution.

One, however, proceeds differently. Given a value of $\vQ$ (or locally, if a field $\vQ(\vx)$ is specified), we minimize $\MS{\rho}$ over the space of probability distribution functions with the condition that $\vQ$ is given by Eq.  (\ref{eqn:defn_of_Q}).

It is straightforward to write the interaction energy solely as a function of $\vQ$. We have
\begin{equation}\label{eqn:kernel_simplify_pf_1}
\begin{split}
K[\rho,\rho] &= \iSp \vq\tp \left( \iSp \left[ \vp \otimes \vp - \frac{1}{3} \vI \right] \rho(\vp) \dA{\vp} \right) \vq \rho(\vq) \dA{\vq} \\
&= \iSp \vq\tp \vQ \vq \rho(\vq) \dA{\vq} = \vQ \dd \iSp \left[ \vq \otimes \vq - \frac{1}{3} \vI \right] \rho(\vq) \dA{\vq} = \vQ \dd \vQ = |\vQ|^2,
\end{split}
\end{equation}
where we used the fact that $\vQ$ is traceless. Therefore, the energy term 
in the Maier-Saupe free energy of a given configuration $\vQ$ is 
simply $-\kappa  |\vQ|^{2}$.

The computation of the entropy for fixed $\vQ$ is more complex.  As in other field theories,  one needs to ``invert'' the relationship in \eqref{eqn:defn_of_Q},  i.e., given $\vQ$ find the corresponding $\rho$ that provides this value of $\vQ$ in equilibrium.  Of course, this is ill-posed, so we must impose some additional conditions. We use a mean field assumption, according to which $\rho$ minimizes $\MS{\rho}$ over the admissible set, and define the corresponding mean field free energy as
\begin{equation}\label{eqn:defn_bulk_pot}
\begin{split}
\Bulkfunc(\vQ) &:= \inf_{\rho \in \Admisrho{\vQ}} \MS{\rho}, \\
&= T \inf_{\rho \in \Admisrho{\vQ}} \entropy{\rho} - \kappa |\vQ|^2,
\end{split}
\end{equation}
where the admissible set is
\begin{equation}\label{eqn:defn_admis_set}
\begin{split}
\Admisrho{\vQ} := \left\{ \rho \in \Prob \mid \vQ = \iSp \left[ \vp \otimes \vp - \frac{1}{3} \vI \right] \rho(\vp) \dA{\vp} \right\}.
\end{split}
\end{equation}
Since $\vQ$ is fixed, we can focus on the entropy.  Define
\begin{equation}\label{eqn:defn_bulk_f}
\begin{split}
\fbulk (\vQ) :=
\begin{cases}
\inf_{\rho \in \Admisrho{\vQ}} \entropy{\rho}, & \text{if eigenvalues of } \vQ \text{ satisfy \eqref{eqn:Q_eigenvalue_bounds}}, \\
+\infty, & \text{else}.
\end{cases}
\end{split}
\end{equation}
Then
\begin{equation}\label{eqn:bulk_pot_explicit}
\begin{split}
\Bulkfunc(\vQ) &= T \fbulk(\vQ) - \kappa |\vQ|^2.
\end{split}
\end{equation}
Figure \ref{fig:BulkPot} shows plots of Eq. \eqref{eqn:bulk_pot_explicit} for $\vQ$ parameterized as
\begin{equation}\label{eqn:Q_param}
\begin{split}
\vQ &= S \left( \vn \otimes \vn - \frac{1}{3} \vI \right) + R \left( \vm \otimes \vm - \frac{1}{3} \vI \right)
\end{split}
\end{equation}
over the ``physical triangle'' as in \cite{Majumdar_EJAM2010}.  For the plots we set $\kappa / T = 4$.  As seen in the figure, when $\vQ$ gets close to the physical bounds the bulk potential diverges.  For this value of $\kappa / T$, there are three minima corresponding to uniaxial states where the director is $\vn$, $\vm$, or $\vn \times \vm$.  For $\kappa / T < 3.4049$ there is a single minimum at $\vQ = \mathbf{0}$ corresponding to the isotropic phase \cite{Schimming_PRE2020a}.

\begin{figure}
	\includegraphics[width = \textwidth]{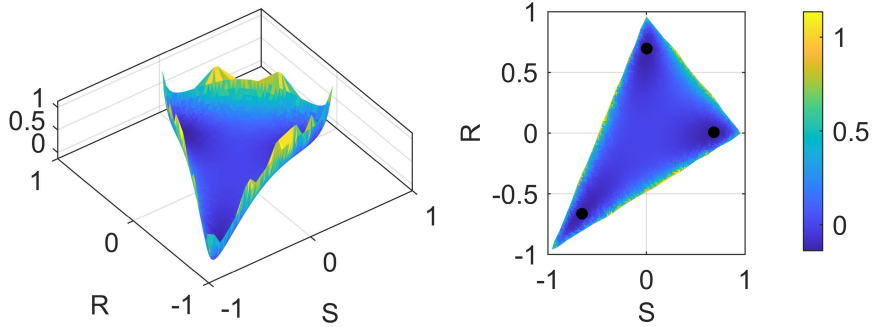}
	\caption{The bulk potential, Eq. \eqref{eqn:bulk_pot_explicit}, with $\vQ = S(\vn \otimes \vn - (1/3) \vI) + R(\vm \otimes \vm - (1/3) \vI)$ and $\kappa / T = 4$.  The energy goes to infinity as $\vQ$ approaches the physical bounds.  For the chosen value of $\kappa / T$ there are three minima that represent uniaxial states for each possible director direction which are highlighted with black dots.}
	\label{fig:BulkPot}
\end{figure}

\subsection{Properties}

Before proceeding with the numerical algorithm, and for completeness, we begin by summarizing a few preliminary results, see \cite{Ball_MCLC2010}.

\begin{lem}\label{lem:non-empty_admis}
For any $\vQ \in \symmtraceless$, such that $-1/3 < \lambda_{i}(\vQ) < 2/3$ (for $i=1,2,3$), the set $\Admisrho{\vQ}$ is non-empty.
\end{lem}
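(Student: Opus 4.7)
My plan is to construct an explicit member of $\Admisrho{\vQ}$ via a Bingham-type exponential density, after reducing to diagonal $\vQ$ by rotational covariance.

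First I would exploit $\SO$-covariance. The defining relation for $\Admisrho{\vQ}$ is covariant: if $R \in \SO$ and $\rho \in \Prob$ realizes $\vQ$, then the pulled-back density $\tilde\rho(\vp) := \rho(R\tp \vp)$ lies in $\Prob$ and realizes $R \vQ R\tp$, by the change of variables $\vq = R\tp \vp$ inside \eqref{eqn:defn_of_Q}. Diagonalizing $\vQ = R \vD R\tp$ with $\vD := \diag(\lambda_1,\lambda_2,\lambda_3)$, it suffices to construct an admissible $\rho_\vD \in \Admisrho{\vD}$ and transport back.

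Next, for a traceless diagonal $\vA = \diag(a_1,a_2,a_3)$, I would use the Bingham ansatz
\[
\rho_\vA(\vp) := Z(\vA)^{-1}\exp(\vp\tp \vA \vp), \qquad Z(\vA) := \iSp \exp(\vp\tp \vA \vp) \dA{\vp},
\]
which satisfies $\rho_\vA \in \Prob$ automatically. Differentiating under the integral yields $\iSp (\vp \otimes \vp)\rho_\vA \dA{\vp} = \nabla_\vA \log Z(\vA)$, so the map $\vA \mapsto \vQ(\vA) := \nabla_\vA \log Z(\vA) - (1/3)\vI$ is, up to the tracial constant, the gradient of $\log Z$ restricted to the two-dimensional space of traceless diagonal matrices. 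The strict log-convexity of $Z$ there (a standard exponential-family fact coming from positive definiteness of the covariance of $\vp\tp(\cdot)\vp$) gives injectivity, and the inverse function theorem yields openness of the image. The image is automatically contained in the closed admissible set, because for any unit vector $\ve_i$ one has $\ve_i\tp \vQ(\vA)\ve_i = \iSp((\vp\cdot\ve_i)^2 - 1/3)\rho_\vA \dA{\vp} \in [-1/3, 2/3]$.

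The last and hardest step is to show that the open image equals the full open admissible region $\{-1/3 < \lambda_i < 2/3\}$. My plan here is a compactness/limit argument: Laplace-type asymptotics as $|\vA| \to \infty$ along rays in the traceless diagonal plane show that $\rho_\vA$ concentrates on $\pm \ve_i$ (when $a_i \to +\infty$, producing $\lambda_i \to 2/3$) or on the great circle orthogonal to $\ve_i$ (when $a_i \to -\infty$, producing $\lambda_i \to -1/3$). Every boundary configuration of eigenvalues is thus approached in some limit of $\vA$, and by openness of the image together with a connectedness argument the map must cover the entire open admissible region. Applied to the given $\vD$, this produces the Lagrange multiplier $\vA$ and the explicit density $\rho_\vA \in \Admisrho{\vD}$, and the rotational transport of Step 1 yields the required element of $\Admisrho{\vQ}$. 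The main obstacle is the asymptotic analysis of $Z(\vA)$ along the various limiting directions; once it is in hand the remainder is standard convex-duality bookkeeping.
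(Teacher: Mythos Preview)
Your plan is correct in outline but takes a quite different route from the paper. The paper gives a completely elementary, explicit construction: after diagonalizing $\vQ$, it writes down the singular measure $\sum_k (\lambda_k + 1/3)\,\tfrac{1}{2}(\delta_{\ve_k}+\delta_{-\ve_k})$, which already has the right second moments, and then mollifies each Dirac mass by the indicator of a small spherical cap; a two-line computation with two normalizing constants shows the mollified density still lies in $\Admisrho{\vLambda}$. No convexity, no asymptotics, no surjectivity argument is needed.

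Your approach instead proves surjectivity of the moment map $\vA \mapsto \nabla_{\vA}\log Z(\vA)$ onto the open physical triangle, which is precisely the content of Theorem~\ref{thm:results_for_fbulk} (existence of the Lagrange multiplier). So you are effectively proving the later, harder theorem and reading off Lemma~\ref{lem:non-empty_admis} as a corollary. That is legitimate, and it has the advantage of producing the \emph{optimal} $\rho$ directly, but it is substantially more work for the bare non-emptiness claim, and in the paper's logical order Lemma~\ref{lem:non-empty_admis} is used as input to Theorem~\ref{thm:results_for_fbulk}. One small caution on your write-up: the connectedness step needs the implication in the \emph{properness} direction, namely that $|\vA_n|\to\infty$ forces $\vQ(\vA_n)$ to the boundary (so the image is relatively closed in the open triangle); your asymptotics establish exactly this, but the sentence ``every boundary configuration of eigenvalues is thus approached'' emphasizes the converse, which is not what the open--closed--connected argument uses.
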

\begin{proof}
Given $\vQ \in \symmtraceless$, let $\vR$ be the orthogonal matrix that diagonalizes $\vQ$, i.e. $\vLambda = \vR\tp \vQ \vR$, where $(\vLambda)_{ii} = \lambda_{i}(\vQ)$, for $i=1,2,3$, where $2/3 > \lambda_{1} \geq 0$, $-1/3 < \lambda_{3} \leq 0$.  Now define the following (generalized) function (singular measure)
\begin{equation}\label{eqn:nonempty_admisQ_pf_1}
\begin{split}
\tilde{\rho} (\vp) = \sum_{k=1}^{3} \left( \lambda_{k} + \frac{1}{3} \right) \frac{\delta(\vp - \ve_{k}) + \delta(\vp + \ve_{k})}{2}, \quad \text{where} \quad \iSp \tilde{\rho} (\vp) \dA{\vp} = 1,
\end{split}
\end{equation}
where $\delta(\cdot - \va)$, for $\va \in \Sp$, is the Dirac delta function on $\Sp$ such that
\begin{equation}\label{eqn:nonempty_admisQ_pf_2}
\begin{split}
\iSp g(\vp) \delta(\vp - \va) \dA{\vp} = g(\va).
\end{split}
\end{equation}
Now let $X_{ij} := \iSp \left( \vp \otimes \vp - \frac{1}{3} \vI \right)_{ij} \tilde{\rho}(\vp) \dA{\vp}$, for $1 \leq i,j \leq 3$, and one can check that
\begin{equation}\label{eqn:nonempty_admisQ_pf_3}
\begin{split}
X_{ii} = \lambda_{i}, ~\text{ for } i=1,2,3, \quad \text{and} \quad X_{ij} = 0, ~\text{ for } i \neq j.
\end{split}
\end{equation}
In other words, we have
\begin{equation}\label{eqn:nonempty_admisQ_pf_4}
\begin{split}
\vLambda = \iSp \left( \vp \otimes \vp - \frac{1}{3} \vI \right) \tilde{\rho}(\vp) \dA{\vp},
\end{split}
\end{equation}
i.e. it satisfies the constraint.  Next, we replace $(\delta(\vp - \ve_{k}) + \delta(\vp + \ve_{k}))/2$ by a regularized version
\begin{equation}\label{eqn:nonempty_admisQ_pf_5}
\begin{split}
\phi_{k}^{\epsilon} (\vp) =
\begin{cases}
\frac{1}{2 |A_\epsilon^{1}|}, &\text{if } ~|\vp\cdot\ve_{k}| \geq 1 - \epsilon, \\
0, &\text{if } ~ |\vp\cdot\ve_{k}| < 1 - \epsilon,
\end{cases}
\end{split}
\end{equation}
where $\pm A_{\epsilon}^{k}$ denotes the spherical cap at $\pm \ve_{k}$ and $|A_\epsilon^{1}| = 2 \pi \epsilon$ is the area of one of the two spherical caps over which $\phi_{k}^{\epsilon} \neq 0$, with $\epsilon > 0$ small.  Now define a regularized version of \eqref{eqn:nonempty_admisQ_pf_1}:
\begin{equation}\label{eqn:nonempty_admisQ_pf_6}
\begin{split}
\tilde{\rho}^{\epsilon} (\vp) = a_0 \sum_{k=1}^{3} \left( \lambda_{k} + \frac{1}{3} + a_1 \right) \phi_{k}^{\epsilon} (\vp),
\end{split}
\end{equation}
where $a_0 > 0$ and $a_1$ are constants.  For convenience, define
\begin{equation}\label{eqn:nonempty_admisQ_pf_7}
\begin{split}
r_{i}^{k} = \frac{1}{2 |A_\epsilon^{1}|} \int_{+ A_{\epsilon}^{k} \cup - A_{\epsilon}^{k}} p_{i}^2 \dA{\vp}, \quad r_{i}^{i} = 1 - O(\epsilon), \forall i, \quad r_{i}^{k} = O(\epsilon^2), \text{ for } i \neq k,
\end{split}
\end{equation}
and note that $1 = \sum_{i=1} r_{i}^{k}$ for all $k = 1,2,3$, and symmetry implies
\begin{equation}\label{eqn:nonempty_admisQ_pf_7b}
\begin{split}
r_{i}^{k} = r_{k}^{i}, \quad \text{and} \quad r_{k}^{i} = r_{t}^{i} ~ \text{ for all } i, k, t ~\text{ distinct},
\end{split}
\end{equation}
which implies that $r_{k}^{k} + 2 r_{k}^{i} = 1$ whenever $i \neq k$. 
Now define $X_{ij}^{\epsilon} := \iSp \left( \vp \otimes \vp - \frac{1}{3} \vI \right)_{ij} \tilde{\rho}^{\epsilon}(\vp) \dA{\vp}$, for $1 \leq i,j \leq 3$, and compute:
\begin{equation}\label{eqn:nonempty_admisQ_pf_8}
\begin{split}
X_{ii}^{\epsilon} &= a_0 \sum_{k=1}^{3} \left( \lambda_{k} + \frac{1}{3} + a_1 \right) \frac{1}{2 |A_\epsilon^{1}|} \int_{+ A_{\epsilon}^{k} \cup - A_{\epsilon}^{k}} \left( p_{i}^2 - \frac{1}{3} \right) \dA{\vp} \\
&= a_0 \sum_{k=1}^{3} \left( \lambda_{k} + \frac{1}{3} + a_1 \right) \left( r_{i}^{k} - \frac{1}{3} \right) \\
&= a_0 \sum_{k \neq i} \left( \lambda_{k} + \frac{1}{3} + a_1 \right) \left( r_{i}^{k} - \frac{1}{3} \right) + a_0 \left( \lambda_{i} + \frac{1}{3} + a_1 \right) \left( r_{i}^{i} - \frac{1}{3} \right).
\end{split}
\end{equation}
Since $r_i^i = r_1^1$ for all $i$, and $r_i^k = r_1^2$ for all $i \neq k$, we continue to simplify \eqref{eqn:nonempty_admisQ_pf_8}:
\begin{equation}\label{eqn:nonempty_admisQ_pf_9}
\begin{split}
X_{ii}^{\epsilon} a_0^{-1} &= \left( r_{1}^{2} - \frac{1}{3} \right) \sum_{k \neq i} \left( \lambda_{k} + \frac{1}{3} + a_1 \right) + \left( \lambda_{i} + \frac{1}{3} + a_1 \right) \left( r_{1}^{1} - \frac{1}{3} \right) \\
&= \left( r_{1}^{2} - \frac{1}{3} \right) \left[ \frac{2}{3} - \lambda_{i} + 2 a_1 \right] + \left( \lambda_{i} + \frac{1}{3} + a_1 \right) \left( r_{1}^{1} - \frac{1}{3} \right) \\
&= \lambda_{i} \left( r_{1}^{1} - r_{1}^{2} \right) + \left( \frac{1}{3} + a_1 \right) \underbrace{\left( r_{1}^{1} + 2 r_{1}^{2} - 1 \right)}_{=0} \\
&= \lambda_{i} \left( r_{1}^{1} - r_{1}^{2} \right).
\end{split}
\end{equation}
Thus, letting $a_0 := \left( r_{1}^{1} - r_{1}^{2} \right)^{-1} = 1 + O(\epsilon)$, we get
\begin{equation}\label{eqn:nonempty_admisQ_pf_10}
\begin{split}
X_{ii}^{\epsilon} = \lambda_{i}, \quad \text{for} \quad i = 1,2,3.
\end{split}
\end{equation}
In addition, for $i \neq j$, we see that
\begin{equation}\label{eqn:nonempty_admisQ_pf_11}
\begin{split}
X_{ij}^{\epsilon} &= a_0 \sum_{k=1}^{3} \left( \lambda_{k} + \frac{1}{3} + a_1 \right) \frac{1}{2 |A_\epsilon^{1}|} \int_{+ A_{\epsilon}^{k} \cup - A_{\epsilon}^{k}} p_{i} p_{j} \dA{\vp} = 0,
\end{split}
\end{equation}
where the integral term drops by symmetry/cancellation.  Note that \eqref{eqn:nonempty_admisQ_pf_10} and \eqref{eqn:nonempty_admisQ_pf_11} hold for any value of $a_1$.  We must choose $a_1$ such that
\begin{equation}\label{eqn:nonempty_admisQ_pf_12}
\begin{split}
1 &= \iSp \tilde{\rho}^{\epsilon}(\vp) \dA{\vp} = a_0 \sum_{k=1}^{3} \left( \lambda_{k} + \frac{1}{3} + a_1 \right) = a_0 \left( 1 + 3 a_1 \right), \\
\Rightarrow \quad a_1 &= \frac{1}{3} \left( \frac{1}{a_0} - 1 \right) = O(\epsilon),
\end{split}
\end{equation}
where $a_1$ may be negative.  Since $\lambda_{k} + 1/3 > 0$, choosing $\epsilon > 0$ sufficiently small (but fixed), we see that $\lambda_{k} + (1/3) + a_1 > 0$ for $k = 1,2,3$.  Therefore, $\tilde{\rho}^{\epsilon}(\vp) \geq 0$ for all $\vp$, and $\tilde{\rho}^{\epsilon} \in \Prob$.  Moreover,
\begin{equation}\label{eqn:nonempty_admisQ_pf_14}
\begin{split}
\vLambda = \iSp \left( \vp \otimes \vp - \frac{1}{3} \vI \right) \tilde{\rho}^{\epsilon}(\vp) \dA{\vp},
\end{split}
\end{equation}
and $\tilde{\rho}^{\epsilon} \in \Admisrho{\vLambda}$.  
Finally, by rotating coordinates with $\vR$, and defining $\hat{\rho}^{\epsilon}(\vR \vp) := \tilde{\rho}^{\epsilon}(\vp)$, \eqref{eqn:nonempty_admisQ_pf_14} transforms into
\begin{equation}\label{eqn:nonempty_admisQ_pf_15}
\begin{split}
\vQ = \iSp \left( \hat{\vp} \otimes \hat{\vp} - \frac{1}{3} \vI \right) \hat{\rho}^{\epsilon}(\hat{\vp}) \dA{\hat{\vp}},
\end{split}
\end{equation}
where $\hat{\rho}^{\epsilon} \in \Admisrho{\vQ}$.  So $\Admisrho{\vQ}$ is non-empty.
\end{proof}

The following result lays out the main aspects of $\fbulk$ needed.
\begin{thm}\label{thm:results_for_fbulk}
Given $\vQ$ with $-1/3 < \lambda_{i}(\vQ) < 2/3$ (for $i=1,2,3$), there exists a unique minimizer $\rho^* \in \Admisrho{\vQ}$ to the optimization problem in \eqref{eqn:defn_bulk_f}.  In other words,
\begin{equation}\label{eqn:fbulk_value}
\begin{split}
\fbulk (\vQ) = \iSp \rho^*(\vp) \ln \rho^* (\vp) \dA{\vp},
\end{split}
\end{equation}
where
\begin{equation}\label{eqn:fbulk_density}
\begin{split}
\rho^*(\vp) = \frac{\exp \left( \vp\tp \vA \vp \right)}{\partition{\vA}}, \quad \partition{\vA} = \iSp \exp \left( \vp\tp \vA \vp \right) \dA{\vp},
\end{split}
\end{equation}
and $\vA \in \symmtraceless$ (symmetric, traceless) is the (unique) Lagrange multiplier for the constraint in \eqref{eqn:defn_admis_set}.  Moreover, $\vA$ satisfies the following non-linear equation, a requirement of mean field self-consistency:
\begin{equation}\label{eqn:multiplier_equation}
\begin{split}
\frac{1}{\partition{\vA}} \frac{\partial \partition{\vA}}{\partial \vA} \dd \vP = \vQ \dd \vP, \quad \text{for all } \vP \in \symmtraceless.
\end{split}
\end{equation}
\end{thm}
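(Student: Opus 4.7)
The plan is to combine the direct method of the calculus of variations for the existence and uniqueness of $\rho^*$ with a Lagrange-multiplier argument to expose its exponential-family form and the self-consistency equation (\ref{eqn:multiplier_equation}). As a preliminary check, Lemma \ref{lem:non-empty_admis} supplies $\Admisrho{\vQ} \neq \emptyset$, and the entropy is bounded below on $\Prob$ since $x \ln x \geq -1/e$, so $\inf_{\rho \in \Admisrho{\vQ}} \entropy{\rho}$ is a finite real number.

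For existence, I would take a minimizing sequence $\{ \rho_n \} \subset \Admisrho{\vQ}$. Because $x \ln x$ is superlinear at infinity, the de la Vall\'ee-Poussin / Dunford-Pettis criterion gives weak compactness of $\{\rho_n\}$ in $L^1(\Sp)$, and a subsequence converges weakly to some $\rho^* \geq 0$. The probability normalization and the tensor moment constraint are linear continuous functionals of $\rho$ against bounded test tensors on $\Sp$, so $\rho^* \in \Admisrho{\vQ}$. Convexity of $x \ln x$ gives weak lower semicontinuity of $\entropy{\cdot}$, hence $\entropy{\rho^*} \leq \liminf_n \entropy{\rho_n}$ and $\rho^*$ is a minimizer. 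Uniqueness follows because $x \ln x$ is strictly convex on $(0,\infty)$ while $\Admisrho{\vQ}$ is convex.

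To identify the minimizer, I would apply Lagrange multipliers to the scalar constraint $\iSp \rho \dA{\vp} = 1$ (multiplier $\mu \in \R$) and the tensor constraint $\iSp (\vp \otimes \vp - \tfrac{1}{3}\vI) \rho \dA{\vp} = \vQ$ (multiplier $\vA$). Since the target tensor is symmetric and the constraint is invariant under $\vA \mapsto \vA + c\vI$, one can restrict to $\vA \in \symmtraceless$ without loss. The first-order condition reads $\ln \rho^* + 1 = \mu + \vp\tp \vA \vp$, so $\rho^* = e^{\mu - 1}\exp(\vp\tp \vA \vp)$; enforcing normalization absorbs $e^{\mu - 1}$ into $1/\partition{\vA}$ and yields exactly (\ref{eqn:fbulk_density}). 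Differentiating the partition function gives $\partition{\vA}^{-1} \partial \partition{\vA}/\partial A_{ij} = \iSp p_i p_j \rho^*(\vp) \dA{\vp} = Q_{ij} + \tfrac{1}{3}\delta_{ij}$, and testing against any $\vP \in \symmtraceless$ cancels the identity piece because $\tr \vP = 0$, producing (\ref{eqn:multiplier_equation}).

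The main obstacle is existence and uniqueness of the multiplier $\vA \in \symmtraceless$. Uniqueness is the cleaner half: the log-partition $\vA \mapsto \ln \partition{\vA}$ is smooth on $\symmtraceless$ with Hessian equal to the covariance of $\vp \otimes \vp - \tfrac{1}{3}\vI$ under the density $\exp(\vp\tp \vA \vp)/\partition{\vA}$, which is strictly positive definite because no nontrivial traceless symmetric quadratic form in $\vp$ is $\rho$-a.e.\ constant on $\Sp$. Existence requires a constraint qualification; the hypothesis that all $\lambda_i(\vQ)$ lie strictly inside $(-1/3, 2/3)$ is exactly what is needed, since the construction of Lemma \ref{lem:non-empty_admis} can be perturbed in every admissible direction, placing $\vQ$ in the relative interior of the moment set. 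Equivalently, one can argue directly by minimizing the strictly convex dual $D(\vA) := \ln \partition{\vA} - \vA \dd \vQ$ on $\symmtraceless$: $D$ is coercive precisely when the eigenvalues of $\vQ$ stay off the boundary (otherwise one can drive $D \to -\infty$ along a rank-one direction aligned with a saturating eigenvector), so $D$ admits a unique minimizer whose stationarity condition is (\ref{eqn:multiplier_equation}). Substituting this $\vA$ back into $\exp(\vp\tp \vA \vp)/\partition{\vA}$ produces an element of $\Admisrho{\vQ}$ with the asserted form, and by the uniqueness of the entropy minimizer established above it must coincide with $\rho^*$.
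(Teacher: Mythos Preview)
Your proof is correct and follows the same overall strategy as the paper: establish existence and uniqueness of the minimizer by convexity, identify its form through first-order optimality conditions, and then derive the self-consistency equation by differentiating the partition function. The main technical difference lies in how the probability-measure constraints are handled. The paper parameterizes $\rho = e^{\omega}/\partition{\omega}$ so that positivity and normalization hold automatically, and takes variations in $\omega$; you instead introduce a separate scalar multiplier $\mu$ for the normalization and apply Lagrange multipliers directly. Both routes yield the same exponential form. Your treatment is in fact more careful on two points the paper glosses over: you invoke the de la Vall\'ee-Poussin/Dunford--Pettis criterion explicitly for weak $L^1$ compactness (the paper simply appeals to ``standard theory from the calculus of variations''), and you address existence of the multiplier $\vA$ via coercivity of the dual $D(\vA) = \ln \partition{\vA} - \vA \dd \vQ$ under the strict-eigenvalue hypothesis, whereas the paper defers the dual characterization and uniqueness of $\vA$ to the subsequent corollary rather than proving them within the theorem itself.
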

\begin{proof}
\textbf{Step 1.} We show that the minimization problem is well-posed.  From Lemma \ref{lem:non-empty_admis}, we know $\Admisrho{\vQ}$ is non-empty.  Moreover, the constraint in \eqref{eqn:defn_admis_set} is clearly convex, so $\Admisrho{\vQ}$ is a convex set.  In addition, $\entropy{\cdot}$ is a convex functional on $\Prob$, because $\rho \ln \rho$ is a (strictly) convex function of $\rho$.  Hence, $\entropy{\cdot}$ is weakly lower semi-continuous on $\Prob$.  So, by standard theory from the calculus of variations, there exists a minimizer $\rho^* \in \Admisrho{\vQ}$, and it is unique by convexity.

\textbf{Step 2.} Derive the Euler-Lagrange equations that characterize the minimizer.  We will mainly proceed formally, but this can be made more rigorous with similar arguments as in \cite[Ch. 8]{Evans:book}. In order to account for the constraint in $\Admisrho{\vQ}$, define the \emph{Lagrangian}
\begin{equation}\label{eqn:lagrangian_KKT_pf_1}
\begin{split}
\Lagr{\rho}{\vA} &:= \entropy{\rho} + \vA \dd \Constr{\rho} = \iSp \rho(\vp) \ln \rho (\vp) \dA{\vp} + \vA \dd \Constr{\rho}, \\
\Constr{\rho} &:= \vQ - \iSp \left( \vp \otimes \vp - \frac{1}{3} \vI \right) \rho(\vp) \dA{\vp} \in \symmtraceless,
\end{split}
\end{equation}
where $\vA$ is a constant matrix in $\symmtraceless$.  In order to account for the other constraints of $\rho$ being a probability measure, let us parameterize it:
\begin{equation}\label{eqn:lagrangian_KKT_pf_2}
\begin{split}
\rho(\vp) = \frac{e^{\omega (\vp)}}{\partition{\omega}}, \quad \text{where} \quad \partition{\omega} = \iSp e^{\omega (\vp)} \dA{\vp},
\end{split}
\end{equation}
where $\omega : \Sp \to \R \cup \{ -\infty \}$ is an ``arbitrary'' (measurable) function; thus, $\rho$ is a probability measure for any $\omega$.  We list some perturbation formulas that will be useful later.  Let $\omega_{\epsilon} = \omega + \epsilon \eta$, where $\epsilon > 0$ is small, and $\eta$ is an arbitrary measurable function (perturbation).  Then, standard variational calculus gives
\begin{equation}\label{eqn:lagrangian_KKT_pf_3}
\begin{split}
\delta_{\omega} (e^{\omega}) (\eta) &:= \frac{d}{d \epsilon} \Big{|}_{\epsilon=0} e^{\omega + \epsilon \eta} = \eta e^{\omega}, \quad \delta_{\omega} \partition{\omega} (\eta) = \iSp \eta(\vp) e^{\omega (\vp)} \dA{\vp}, \\
\delta_{\omega} \left( \frac{\omega}{\partition{\omega}} \right) (\eta) &= \frac{\eta}{\partition{\omega}} - \frac{\omega}{(\partition{\omega})^2} \delta_{\omega} \partition{\omega} (\eta) = \frac{1}{\partition{\omega}} \left( \eta - \omega \iSp \eta(\vp) \rho(\vp) \dA{\vp} \right) = \frac{1}{\partition{\omega}} \left( \eta - \omega \bar{\eta} \right), \\
\delta_{\omega} \left( \frac{e^{\omega}}{\partition{\omega}} \right) (\eta) &= \frac{\eta e^{\omega}}{\partition{\omega}} - \frac{e^{\omega}}{(\partition{\omega})^2} \delta_{\omega} \partition{\omega} (\eta) = \rho(\vp) \left( \eta(\vp) - \bar{\eta} \right),
\end{split}
\end{equation}
where $\bar{\eta} = \Eval{\rho}{\eta}$ (the expected value with respect to $\rho$).

\textbf{Step 3.} Next, we derive the KKT conditions for the optimal solution of the problem, in terms of $\omega$.  So, we instead form the Lagrangian \eqref{eqn:lagrangian_KKT_pf_1} in terms of $\omega$: $\Lagr{\omega}{\vA} := \entropy{\omega} + \vA \dd \Constr{\omega}$.  Computing the variation of the entropy, we have
\begin{equation}\label{eqn:lagrangian_KKT_pf_4}
\begin{split}
\delta_{\omega} \entropy{\omega} (\eta) &= \iSp \delta_{\omega} \left( \frac{e^{\omega}}{\partition{\omega}} \right) (\eta) \left[ \omega(\vp) - \ln \partition{\omega} \right] \dA{\vp} + \iSp \rho (\vp) \left[ \eta(\vp) - \frac{1}{\partition{\omega}} \delta_{\omega} \partition{\omega} (\eta) \right] \dA{\vp} \\
&= \iSp \rho(\vp) \left( \eta(\vp) - \bar{\eta} \right) \left[ \omega(\vp) + 1 - \ln \partition{\omega} \right] \dA{\vp} = \iSp \rho(\vp) \left( \eta(\vp) - \bar{\eta} \right) \omega(\vp) \dA{\vp},
\end{split}
\end{equation}
where the last equality is because of the definition of $\bar{\eta}$ and the fact that $+ 1 - \ln \partition{\omega}$ is a constant.  Note that $\rho$ satisfies \eqref{eqn:lagrangian_KKT_pf_2}.

Next, note that
\begin{equation}\label{eqn:lagrangian_KKT_pf_5}
\begin{split}
\vA \dd \Constr{\omega} = \vA \dd \vQ - \iSp \vp\tp \vA \vp \frac{e^{\omega (\vp)}}{\partition{\omega}} \dA{\vp}.
\end{split}
\end{equation}
Hence, the first variation of the constraint gives
\begin{equation}\label{eqn:lagrangian_KKT_pf_6}
\begin{split}
\delta_{\omega} (\vA \dd \Constr{\omega}) (\eta) &= - \iSp \vp\tp \vA \vp \, \delta_{\omega} \left( \frac{e^{\omega}}{\partition{\omega}} \right) (\eta) \dA{\vp} = - \iSp \vp\tp \vA \vp \, \rho(\vp) \left( \eta(\vp) - \bar{\eta} \right) \dA{\vp}.
\end{split}
\end{equation}

The first KKT condition is given by $\delta_{\omega} \Lagr{\omega}{\vA} (\eta) = 0$, for all admissible $\eta$.  Therefore, by the above calculations, we obtain
\begin{equation}\label{eqn:lagrangian_KKT_pf_7}
\begin{split}
0 &= \delta_{\omega} \Lagr{\omega}{\vA} (\eta) = \iSp \left[ \omega(\vp) - \vp\tp \vA \vp \right] \rho(\vp) \left( \eta(\vp) - \bar{\eta} \right) \dA{\vp},
\end{split}
\end{equation}
for all admissible $\eta$.  This implies that $\omega(\vp) - \vp\tp \vA \vp = c$, where $c$ is any constant.  So, from \eqref{eqn:lagrangian_KKT_pf_2}, we find
\begin{equation}\label{eqn:lagrangian_KKT_pf_8}
\begin{split}
\rho(\vp) = e^{c} \frac{\exp \left( \vp\tp \vA \vp \right)}{\partition{\omega}} = \frac{\exp \left( \vp\tp \vA \vp \right)}{\partition{\vA}},
\end{split}
\end{equation}
where the $e^{c}$ term cancels out.  Since the minimizer is unique, we have proven \eqref{eqn:fbulk_density}.  The second KKT condition simply recovers the constraint $\vzero = \partial \Lagr{\omega}{\vA} / \partial \vA = \Constr{\omega}$.

\textbf{Step 4.} Finally, the last step in the inversion is an equation to determine $\vA \in \symmtraceless$.  Starting from the relation $\partition{\vA} = \iSp \exp \left( \vp\tp \vA \vp \right) \dA{\vp}$, we first differentiate with respect $\vA$ but in the direction of general symmetric matrices, not necessarily trace free:
\begin{equation}\label{eqn:implicit_eqn_lagr_mult_pf_1}
\begin{split}
\frac{\partial \partition{\vA}}{\partial \vA} &= \iSp \frac{\partial}{\partial \vA} \left( \vp\tp \vA \vp \right) \exp \left( \vp\tp \vA \vp \right) \dA{\vp} = \iSp \left( \vp \otimes \vp \right) \exp \left( \vp\tp \vA \vp \right) \dA{\vp} \\
&= \iSp \left( \vp \otimes \vp - \frac{1}{3} \vI \right) \exp \left( \vp\tp \vA \vp \right) \dA{\vp} + \frac{1}{3} \vI \iSp \exp \left( \vp\tp \vA \vp \right) \dA{\vp} \\
&= \partition{\vA} \left[ \iSp \left( \vp \otimes \vp - \frac{1}{3} \vI \right) \rho (\vp) \dA{\vp} + \frac{1}{3} \vI \right] = \partition{\vA} \left[ \vQ + \frac{1}{3} \vI \right],
\end{split}
\end{equation}
where $\rho$ satisfies \eqref{eqn:lagrangian_KKT_pf_8}.  Thus, the multiplier $\vA$ satisfies the following equation
\begin{equation}\label{eqn:implicit_eqn_lagr_mult_pf_2}
\begin{split}
\frac{1}{\partition{\vA}} \frac{\partial \partition{\vA}}{\partial \vA} &= \vQ + \frac{1}{3} \vI.
\end{split}
\end{equation}
Dotting \eqref{eqn:implicit_eqn_lagr_mult_pf_2} with an arbitrary ``test'' function in $\symmtraceless$, we get \eqref{eqn:multiplier_equation}.
\end{proof}

We will also make use of these results for $\fbulk$. Note that the partition function \eqref{eqn:lagrangian_KKT_pf_2} is a single particle partition function obtained by integration over $\vp \in S^{2}$, but with {\em specified} values of the Lagrange multiplier $\vA$. The fact that it simply involves a quadratic form of $\vp$ originates from the form of the constrain \eqref{eqn:defn_of_Q}. In the mean field approximation considered, given $\vQ$ there is a unique $\vA$, defined by \eqref{eqn:implicit_eqn_lagr_mult_pf_2}, so that the corresponding $\rho$ in \eqref{eqn:lagrangian_KKT_pf_2} gives as average precisely $\vQ$.

The following result illustrates additional properties of $\fbulk(\vQ)$, 
including the simultaneous diagonalization of $\vQ$ and $\vA$, which can be 
useful for numerical implementation purposes.

\begin{cor}\label{cor:more_results_for_fbulk}
The function $\fbulk(\vQ)$ is a strictly convex function of $\vQ$.  In addition, any $\vQ \in \symmtraceless$, and the corresponding unique $\vA$ coming from solving the constrained minimization problem in Theorem \ref{thm:results_for_fbulk}, are diagonalized by the \emph{same} orthogonal matrix $\vR$, i.e.
\begin{equation}\label{eqn:simult_diag}
\vLambda = \vR\tp \vQ \vR, \quad \vPi = \vR\tp \vA \vR,
\end{equation}
where $\vLambda = \diag (\lambda_{1}, \lambda_{2}, \lambda_{3})$ and $\vPi = \diag (\pi_{1}, \pi_{2}, \pi_{3})$, where $\{ \pi_{i} \}_{i=1}^{3}$ are the eigenvalues of $\vA$.

Moreover, the Lagrange multiplier $\vA$ can be characterized as the optimal solution of the dual problem.  In other words, define
\begin{equation}\label{eqn:dual_convex_func}
\dual{\vA} := \ln \left( \partition{\vA} \right) - \vQ \dd \vA,
\end{equation}
where $\dual{\cdot} : \symmtraceless \to \R$ is a strictly convex function (but not uniformly strictly convex).  Then, the optimal Lagrange multiplier $\vA$ from Theorem \ref{thm:results_for_fbulk} is the unique minimizer of \eqref{eqn:dual_convex_func} over the set of symmetric, traceless matrices, i.e. an unconstrained minimization problem.
\end{cor}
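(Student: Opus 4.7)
The plan is to prove the three claims in sequence, with all three flowing from strict convexity of the entropy functional together with standard convex-duality arguments. For strict convexity of $\fbulk$, I would take $\vQ_0 \ne \vQ_1$ whose eigenvalues satisfy \eqref{eqn:Q_eigenvalue_bounds}, pick $t \in (0,1)$, and let $\rho_0^*, \rho_1^*$ denote the unique entropy minimizers guaranteed by Theorem \ref{thm:results_for_fbulk}. Because the constraint \eqref{eqn:defn_admis_set} is affine in $\rho$, the convex combination $\rho_t := (1-t)\rho_0^* + t \rho_1^*$ lies in $\Admisrho{\vQ_t}$ for $\vQ_t := (1-t)\vQ_0 + t \vQ_1$, uniqueness forces $\rho_0^* \ne \rho_1^*$, and strict convexity of $s \mapsto s \ln s$ yields $\fbulk(\vQ_t) \le \entropy{\rho_t} < (1-t)\fbulk(\vQ_0) + t \fbulk(\vQ_1)$.

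For the simultaneous diagonalization, I would let $\vA = \vR \vPi \vR\tp$ be a spectral decomposition and change variables $\vp = \vR \vp'$ in \eqref{eqn:fbulk_density} and \eqref{eqn:defn_of_Q}. Rotation invariance of $\Sp$ and its surface measure gives $\vQ = \vR \tilde{\vQ} \vR\tp$, where $\tilde{\vQ} := \iSp \bigl(\vp' \otimes \vp' - \tfrac{1}{3}\vI\bigr) \exp((\vp')\tp \vPi \vp') / \partition{\vPi} \dA{\vp'}$. Since $\vPi$ is diagonal, each off-diagonal entry of $\tilde{\vQ}$ has integrand odd under the sign flip $p'_i \mapsto -p'_i$ and therefore vanishes by symmetry; hence $\tilde{\vQ}$ is diagonal and $\vR\tp \vQ \vR = \tilde{\vQ}$, which is \eqref{eqn:simult_diag}.

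For the dual characterization, I would compute the first two variations of $\dual{\vA} = \ln \partition{\vA} - \vQ \dd \vA$ along directions $\vP \in \symmtraceless$. Using \eqref{eqn:implicit_eqn_lagr_mult_pf_1}, the first variation is $(\vQ_\vA + \tfrac{1}{3}\vI) \dd \vP - \vQ \dd \vP = (\vQ_\vA - \vQ) \dd \vP$, where $\vQ_\vA$ is the tensor produced by \eqref{eqn:defn_of_Q} from the density $\rho_{\vA}^{*}$ induced by $\vA$; the identity contribution drops because $\vP$ is traceless. Requiring this to vanish for all traceless $\vP$, together with $\vQ_\vA - \vQ \in \symmtraceless$, recovers \eqref{eqn:multiplier_equation}. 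The second variation evaluates to the covariance $\mathrm{Var}_{\rho_{\vA}^{*}}(\vp\tp \vP \vp)$, which is strictly positive for $\vP \in \symmtraceless \setminus \{\vzero\}$: if $\vp\tp \vP \vp$ were constant on $\Sp$, homogeneity in $\vp$ would force $\vP - c \vI = \vzero$ for some $c$, and tracelessness of $\vP$ would then give $c = 0$ and $\vP = \vzero$. Hence $\dual{\cdot}$ is strictly (though not uniformly) convex and the $\vA$ from Theorem \ref{thm:results_for_fbulk} is its unique minimizer.

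The main delicacy throughout is tracking the symmetric/traceless split: $\partition{\vA}$ is naturally a function on all symmetric $\vA$, but variations are restricted to $\symmtraceless$, so the $\vI/3$ shift in \eqref{eqn:implicit_eqn_lagr_mult_pf_2} must be dispatched by pairing against traceless test matrices, and strict convexity hinges on the observation that a nonzero traceless $\vP$ never makes $\vp\tp \vP \vp$ constant on $\Sp$.
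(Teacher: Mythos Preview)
Your proposal is correct and follows essentially the same route as the paper: the convexity argument via convex combinations of the optimal densities, the simultaneous-diagonalization argument by rotating into the eigenframe of $\vA$ and using parity, and the dual characterization via first/second variations of $\dual{\vA}$ all mirror the paper's proof. Your justification that $\mathrm{Var}_{\rho_{\vA}^{*}}(\vp\tp \vP \vp) > 0$ for nonzero traceless $\vP$ (via homogeneity forcing $\vP = c\vI$ if the quadratic form were constant on $\Sp$) is in fact sharper than the paper's, which argues more loosely that a zero variance would force some marginal of $\rho$ to be a Dirac mass.
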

\begin{proof}
\textbf{Convexity.}  Let $\vQ_{0}, \vQ_{1} \in \symmtraceless$ with eigenvalues satisfying \eqref{eqn:Q_eigenvalue_bounds}, and let $\rho_{i}$ be the minimizer in \eqref{eqn:fbulk_value} corresponding to $\vQ_{i}$, for $i=0,1$. Set $\vQ(t) = \vQ_{0} (1-t) + \vQ_{1} t$ for all $t \in [0,1]$, and also define $\rho(\cdot, t) := \rho_{0}(\cdot) (1-t) + \rho_{1} (\cdot) t$.  Then, since $\entropy{\rho}$ is a strictly convex functional of $\rho$, for all $0 < t < 1$ we have
\begin{equation}\label{eqn:convexity_fbulk_pf_1}
\begin{split}
\fbulk (\vQ(t)) &= \min_{\rho \in \Admisrho{\vQ(t)}} \entropy{\rho} \leq \entropy{\rho(\cdot,t)} < \entropy{\rho_{0}} (1 - t) + \entropy{\rho_{1}} t \\
&= \left(\min_{\rho \in \Admisrho{\vQ_{0}}} \entropy{\rho} \right) (1-t) + \left(\min_{\rho \in \Admisrho{\vQ_{1}}} \entropy{\rho} \right) t = \fbulk (\vQ_{0}) (1-t) + \fbulk (\vQ_{1}) t,
\end{split}
\end{equation}
which verifies the strict convexity of $\fbulk(\cdot)$.

\textbf{Simultaneous diagonalization.}  Given $\vQ$, let $\vA$ and $\rho$ be the optimal solution of the constrained minimization problem, and let $\vR$ be the orthogonal matrix such that $\vPi = \vR\tp \vA \vR$ where $\vPi$ is a diagonal matrix. Then, using the change of variable $\vp = \vR \vq$, we get
\begin{equation}\label{eqn:simult_diag_pf_1}
\begin{split}
\partition{\vA} \vQ &= \iSp \left( \vp \vp\tp - \frac{1}{3} \vI \right) \exp \left( \vp\tp \vA \vp \right) \dA{\vp} = \iSp \left( \vR \vq \vq\tp \vR\tp - \frac{1}{3} \vR \vR\tp \right) \exp \left( \vq\tp \vR\tp \vA \vR \vq \right) \dA{\vq} \\
&= \vR \left[ \iSp \left( \vq \vq\tp - \frac{1}{3} \vI \right) \exp \left( \sum_{i=1}^{3} \pi_{i} q_{i}^2 \right) \dA{\vq} \right] \vR\tp,
\end{split}
\end{equation}
i.e.
\begin{equation}\label{eqn:simult_diag_pf_2}
\begin{split}
\vR\tp \vQ \vR &= \frac{1}{\partition{\vA}} \iSp \left( \vq \vq\tp - \frac{1}{3} \vI \right) \exp \left( \sum_{i=1}^{3} \pi_{i} q_{i}^2 \right) \dA{\vq}.
\end{split}
\end{equation}
Now, note that for $i \neq j$, there holds
\begin{equation}\label{eqn:simult_diag_pf_3}
\begin{split}
0 = \iSp \underbrace{q_{i} q_{j}}_{\text{odd}} \underbrace{\exp \left( \sum_{i=1}^{3} \pi_{i} q_{i}^2 \right)}_{\text{even}} \dA{\vq}.
\end{split}
\end{equation}
That means $\vR\tp \vQ \vR$ must be diagonal.  Since matrix diagonalization (with orthogonal matrices) is unique, $\vR\tp \vQ \vR \equiv \vLambda$.

\textbf{The dual minimization problem.}  Let $\vA^*$ be the optimal Lagrange multiplier from \eqref{eqn:multiplier_equation}; it is clear that $\vA^*$ solves the first order condition for \eqref{eqn:dual_convex_func}:
\begin{equation}\label{eqn:dual_convex_func_pf_1}
0 = \frac{\partial \dual{\vA}}{\partial \vA} \dd \vP = \frac{1}{\partition{\vA}} \frac{\partial \partition{\vA}}{\partial \vA} \dd \vP - \vQ \dd \vP, \quad \text{for all } \vP \in \symmtraceless.
\end{equation}
Next, we compute the Hessian of $\dual{\vA}$ for any $\vA \in \symmtraceless$:
\begin{equation}\label{eqn:dual_convex_func_pf_2}
\begin{split}
\vT \dd \frac{\partial^2 \dual{\vA}}{\partial \vA^2} \dd \vP &= \frac{1}{\partition{\vA}} \vT \dd \frac{\partial^2 \partition{\vA}}{\partial \vA^2} \dd \vP - \frac{1}{(\partition{\vA})^2} \left( \frac{\partial \partition{\vA}}{\partial \vA} \dd \vT \right) \left( \frac{\partial \partition{\vA}}{\partial \vA} \dd \vP \right),
\end{split}
\end{equation}
where
\begin{equation}\label{eqn:dual_convex_func_pf_3}
\begin{split}
\frac{1}{\partition{\vA}} \frac{\partial \partition{\vA}}{\partial \vA} \dd \vP &= \iSp \vp\tp \vP \vp \, \rho (\vp) \dA{\vp} = \Eval{\rho}{\vp\tp \vP \vp}, \\
\frac{1}{\partition{\vA}} \vT \dd \frac{\partial^2 \partition{\vA}}{\partial \vA^2} \dd \vP &= \iSp \left(\vp\tp \vT \vp \right) \left(\vp\tp \vP \vp \right) \rho (\vp) \dA{\vp} = \Eval{\rho}{\left(\vp\tp \vT \vp \right) \left(\vp\tp \vP \vp \right)},
\end{split}
\end{equation}
for all $\vP, \vT \in \symmtraceless$; note that $\Eval{\rho}{\cdot}$ is the expected value with respect to $\rho$, which is determined by $\vA$.  
To show strict convexity, we must verify that $\partial_{\vA}^2 \dual{\vA}$ is positive definite for all $\vA$: 
\begin{equation}\label{eqn:dual_convex_func_pf_4}
\begin{split}
\vP \dd \frac{\partial^2 \dual{\vA}}{\partial \vA^2} \dd \vP &= \Eval{\rho}{\left(\vp\tp \vP \vp \right)^2} - \left( \Eval{\rho}{\vp\tp \vP \vp} \right)^2 = \Eval{\rho}{\left( \vp\tp \vP \vp - \Eval{\rho}{\vp\tp \vP \vp} \right)^2},
\end{split}
\end{equation}
which is the covariance of $\vp\tp \vP \vp$ with respect to $\rho$ (which depends on $\vA$) and is always positive semi-definite.  If \eqref{eqn:dual_convex_func_pf_4} were identically zero, then that would imply that some marginal distribution of $\rho$ is a Dirac delta.  But this is not possible given the form of $\rho$ in \eqref{eqn:lagrangian_KKT_pf_8} so long as $\vA$ is finite.  Therefore, $\frac{\partial^2 \dual{\vA}}{\partial \vA^2}$ is positive definite for all $\vA \in \symmtraceless$ such that $|\vA| < \infty$. This means that $\dual{\cdot}$ is strictly convex.
\end{proof}

\section{Minimizing the Landau-de Gennes Energy}

The free energy minimization of the self consistent free energy \eqref{eqn:bulk_pot_explicit} shares many elements with minimization procedures of the Landau-de Gennes free energy. We summarize known results concerning the later here, and emphasize the differences with the proposed method.

\subsection{Existence of a Minimizer}


The admissible space for $\vQ$ when seeking a minimizer is
\begin{equation}\label{eqn:Landau-deGennes_function_space}
\begin{split}
\LdGspace{\vP} := \left\{ \vQ \in H^{1}(\Om; \symmtraceless) \mid \vQ |_{\Gmdir} = \vP \right\},
\end{split}
\end{equation}
where $\vP \in H^{1}(\Om;\symmtraceless)$.  Note that $\symmtraceless$ is spanned by a set of five orthonormal basis matrices $\{ \basis^{k} \}_{k=1}^{5}$. The set $\Gmdir \subset \Gm$ is where strong anchoring is imposed, i.e. $\vQ |_{\Gm} = \vQdir \in H^{1}(\Gm; \symmtraceless)$, where $\Bulkfunc(\vQdir(\vx)) < \infty$ for all $\vx \in \Gm$.  
The weak anchoring function $\vQgam$ is taken in $L^{2}(\Gm;\symmtraceless)$, with $\Bulkfunc(\vQgam(\vx)) < \infty$ for all $\vx \in \Gm$.  
The minimization problem for the LdG energy functional \eqref{eqn:Landau-deGennes_model_problem} is
\begin{equation}\label{eqn:LdG_min_problem}
\begin{split}
\min_{ \vQ \in \LdGspace{\vQdir}} \ELdG[\vQ],
\end{split}
\end{equation}

Existence of a minimizer requires the energy to be bounded from below.  The following theorem \cite[Lem. 4.1]{Davis_SJNA1998} establishes this result for  the $\Li_{1}$, $\Li_{2}$, and $\Li_{3}$ terms only.
\begin{thm}\label{thm:Gartland_coercivity}
Let $\aform{\cdot}{\cdot} : H^1(\Om;\symmtraceless) \times H^1(\Om;\symmtraceless) \to \R$ be the symmetric bilinear form defined by
\begin{equation}\label{eqn:LdG_form}
\begin{split}
\aform{\vP}{\vT} = \iO \Li_{1} \nabla \vP \trid \nabla \vT + \Li_{2} (\nabla \cdot \vP) \cdot (\nabla \cdot \vT) + \Li_{3} (\nabla \vP)\tp \trid \nabla \vT \, d\vx.
\end{split}
\end{equation}
Then $\aform{\cdot}{\cdot}$ is bounded.  If $\Li_{1}$, $\Li_{2}$, $\Li_{3}$ satisfy
\begin{equation}\label{eqn:Gartland_coercivity}
\begin{split}
0 < \Li_{1}, \quad - \Li_{1} < \Li_{3} < 2 \Li_{1}, \quad - \frac{3}{5} \Li_{1} - \frac{1}{10} \Li_{3} < \Li_{2},
\end{split}
\end{equation}
then there is a constant $C > 0$ such that $\aform{\vP}{\vP} \geq C | \vP |^2_{H^1(\Om)}$ for all $\vP \in H^1(\Om)$.  Moreover, if $|\Gmdir| > 0$, then there is a constant $C' > 0$ such that $\aform{\vP}{\vP} \geq C' \| \vP \|^2_{H^1(\Om)}$ for all $\vP \in \LdGspace{\vzero}$.
\end{thm}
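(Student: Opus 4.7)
The plan is to split the proof into three stages: boundedness, semi-norm coercivity, and upgrade to full-norm coercivity on $\LdGspace{\vzero}$. Boundedness of $\aform{\cdot}{\cdot}$ is immediate: each of the three contributions is a bilinear form in first derivatives, so one Cauchy--Schwarz application gives $|\aform{\vP}{\vT}| \leq C_{L} \, |\vP|_{H^{1}(\Om)} \, |\vT|_{H^{1}(\Om)}$ with $C_{L}$ depending linearly on $|\Li_{1}|+|\Li_{2}|+|\Li_{3}|$.

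The substantive step is the lower bound $\aform{\vP}{\vP} \geq C \, |\vP|^{2}_{H^{1}(\Om)}$. Writing $a_{ijk} := \partial_{k} P_{ij}$, the integrand of $\aform{\vP}{\vP}$ is the pointwise quadratic form
\begin{equation*}
q(a) = \Li_{1} \sum_{ijk} a_{ijk}^{2} + \Li_{2} \sum_{i}\Bigl(\sum_{j} a_{ijj}\Bigr)^{2} + \Li_{3} \sum_{ijk} a_{ikj}\, a_{ijk},
\end{equation*}
so coercivity reduces to a purely algebraic inequality: find $C>0$ such that $q(a) \geq C \sum_{ijk} a_{ijk}^{2}$ on the subspace $\{ a : a_{ijk} = a_{jik}, \ \sum_{i} a_{iik}=0 \text{ for each } k \}$. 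The plan is to decompose this constrained 15-dimensional space into irreducible components under index permutation, so that $q$ becomes block-diagonal and positive-definiteness reduces to a handful of scalar eigenvalue conditions. Equivalently, one can split the indefinite cross-term via $a_{ikj} a_{ijk} = \tfrac{1}{4}(a_{ikj}+a_{ijk})^{2} - \tfrac{1}{4}(a_{ikj}-a_{ijk})^{2}$ and absorb the negative piece into $\Li_{1} \sum a_{ijk}^{2}$, using symmetry and tracelessness of $\vP$ to sharpen constants; the combinatorial fractions $\tfrac{3}{5}$ and $\tfrac{1}{10}$ in \eqref{eqn:Gartland_coercivity} come from counting the dimension of $\symmtraceless$ inside symmetric $3\times 3$ matrices. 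This is exactly the calculation of \cite[Lem. 4.1]{Davis_SJNA1998}, whose blueprint I would follow.

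The final stage, the $H^{1}$-norm estimate on $\LdGspace{\vzero}$, is then a short deduction. Since $|\Gmdir|>0$ and $\vP|_{\Gmdir}=\vzero$, a standard Poincar\'{e} inequality yields $\norm{\vP}_{L^{2}(\Om)} \leq C_{P}\, |\vP|_{H^{1}(\Om)}$; combining with the semi-norm coercivity gives $\aform{\vP}{\vP} \geq C' \norm{\vP}^{2}_{H^{1}(\Om)}$.

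The main obstacle is the sharp pointwise inequality in the second stage: the hypotheses \eqref{eqn:Gartland_coercivity} are tight, so the argument cannot afford to throw away either the $P_{ij}=P_{ji}$ symmetry or the tracelessness $\sum_{i} P_{ii}=0$; losing either weakens the admissible window for $(\Li_{2},\Li_{3})$. Everything else (Cauchy--Schwarz for boundedness and Poincar\'{e} for the norm upgrade) is routine.
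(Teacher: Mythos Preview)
The paper does not supply its own proof of this theorem; it simply states the result and attributes it to \cite[Lem.~4.1]{Davis_SJNA1998}. Your outline---boundedness by Cauchy--Schwarz, pointwise positive-definiteness of the quadratic form in $\partial_{k}P_{ij}$ under the symmetry and trace constraints, then Poincar\'{e} to upgrade to the full norm on $\LdGspace{\vzero}$---is precisely the structure of the cited argument, and you already acknowledge that you would follow that blueprint. So there is nothing to compare: your proposal and the paper both defer to the same source, and your sketch correctly identifies where the work lies (the sharp algebraic inequality exploiting both $P_{ij}=P_{ji}$ and $\sum_i P_{ii}=0$).
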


We also have the bilinear form $\bform{\cdot}{\cdot} : H^1(\Om;\symmtraceless) \times H^1(\Om;\symmtraceless) \to \R$ and trilinear form $\cform{\cdot}{\cdot}{\cdot} :H^1(\Om;\symmtraceless) \times H^1(\Om;\symmtraceless) \times H^1(\Om;\symmtraceless) \to \R$ accounting for the $\Li_{4}$ and $\Li_{*}$ terms:
\begin{equation}\label{eqn:twist_form}
\begin{split}
\bform{\vP}{\vT} = \frac{\Li_{4}}{2} \iO \nabla \vP \trid (\bflevi \cdot \vT) + \nabla \vT \trid (\bflevi \cdot \vP) \, d\vx,
\end{split}
\end{equation}
\begin{equation}\label{eqn:cubic_form}
\begin{split}
\cform{\vT}{\vP}{\vQ} = \frac{\Li_{*}}{2} \iO \Big{\{} &\nabla \vP \trid [(\vT \cdot \nabla) \vQ] + \nabla \vP \trid [(\vQ \cdot \nabla) \vT] \\
     + &\nabla \vT \trid [(\vP \cdot \nabla) \vQ] + \nabla \vT \trid [(\vQ \cdot \nabla) \vP] \\
     + & \nabla \vQ \trid [(\vP \cdot \nabla) \vT] + \nabla \vQ \trid [(\vT \cdot \nabla) \vP] \Big{\}} d\vx.
\end{split}
\end{equation}
Next, consider the following sub-part of the energy \eqref{eqn:Landau-deGennes_model_problem}:
\begin{equation}\label{eqn:subpart_energy}
\begin{split}
	\widetilde{\ELdG} [\vQ] &:= \iO \ELdGfunc (\vQ,\nabla \vQ) \, d\vx + \frac{1}{\bulkeps^2} \iO \Bulkfunc (\vQ) \, d\vx, \\
	&\equiv \frac{1}{2} \aform{\vQ}{\vQ} + \frac{1}{2} \bform{\vQ}{\vQ} + \frac{1}{6} \cform{\vQ}{\vQ}{\vQ} + \frac{1}{\bulkeps^2} \iO \Bulkfunc (\vQ) \, d\vx.
\end{split}
\end{equation}

Combining Theorem \ref{thm:Gartland_coercivity} with the form of the energy in
\eqref{eqn:Landau-deGennes_model_problem} and other basic results (see
\cite[Lem. 4.2, Thm. 4.3]{Davis_SJNA1998} for instance) we arrive at the
following result. 

\begin{thm}[existence of a minimizer]\label{thm:exist_min}
Suppose that $\vQdir$ and $\Gmdir$ are defined as above and that $\LdGrhs$ is a bounded linear functional on $\LdGspace{\vQdir}$.  
Let $\widetilde{\ELdG}$ be given by \eqref{eqn:subpart_energy}, where $\Bulkfunc$ is given by \eqref{eqn:bulk_pot_explicit}.  Furthermore, assume $T > 0$, let $\Li_{4}$ be bounded, and assume that
\begin{equation}\label{eqn:cubic_coeff_bound}
	\widetilde{\Li}_{1} := \Li_{1} - \max \left( \frac{\Li_{*}}{3}, -\frac{3}{2} \Li_{*} \right),
\end{equation}
and $\widetilde{\Li}_{1}$, $\Li_{2}$, $\Li_{3}$ satisfy \eqref{eqn:Gartland_coercivity} with $\Li_{1}$ replaced by $\widetilde{\Li}_{1}$.  
Then $\widetilde{\ELdG}$ has a minimizer in the space $\LdGspace{\vQdir}$, whose eigenvalues are \emph{strictly within} the physical limits \eqref{eqn:Q_eigenvalue_bounds} almost everywhere.  
Furthermore, $\ELdG$ in \eqref{eqn:Landau-deGennes_model_problem}, with $\Bulkfunc$ given by \eqref{eqn:bulk_pot_explicit}, has a minimizer in $\LdGspace{\vQdir}$, whose eigenvalues are \emph{strictly within} the physical limits \eqref{eqn:Q_eigenvalue_bounds} almost everywhere. 
\end{thm}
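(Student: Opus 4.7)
The plan is to apply the direct method of the calculus of variations, exploiting the key observation that any $\vQ$ with $\widetilde{\ELdG}[\vQ] < \infty$ must satisfy $\fbulk(\vQ(\vx)) < \infty$ almost everywhere. By \eqref{eqn:defn_bulk_f}, this forces the eigenvalues of $\vQ(\vx)$ into $[-1/3, 2/3]$ a.e., and hence supplies the pointwise bound $|\vQ(\vx)|^{2} \leq 2/3$ a.e. This a priori $L^{\infty}$ estimate, which is free of charge from the singular potential, is precisely what renders the otherwise indefinite cubic elastic term tractable.

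To establish coercivity I first bound the cubic integrand pointwise: at each $\vx$ the matrix $M_{lk} := (\partial_{l} Q_{ij})(\partial_{k} Q_{ij})$ is symmetric positive semidefinite with $\tr M = |\nabla\vQ|^{2}$, so $\vQ \dd M \in [\lambda_{\min}(\vQ),\lambda_{\max}(\vQ)]\,|\nabla\vQ|^{2} \subseteq [-\tfrac{1}{3},\tfrac{2}{3}]\,|\nabla\vQ|^{2}$. Accounting for the sign of $\Li_{*}$ in both cases then produces exactly the correction that defines $\widetilde{\Li}_{1}$ in \eqref{eqn:cubic_coeff_bound}, so the $\Li_{1}$ and $\Li_{*}$ parts of $\ELdGfunc$ combine pointwise to at least $\tfrac{\widetilde{\Li}_{1}}{2}|\nabla\vQ|^{2}$. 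Combining this with Theorem \ref{thm:Gartland_coercivity} (applied with $\widetilde{\Li}_{1}$ in place of $\Li_{1}$), a Young-type absorption of the first-order $\Li_{4}$ term, the trace theorem to handle the Dirichlet data $\vQdir$, and the elementary bound $\Bulkfunc(\vQ) \geq -\kappa|\vQ|^{2} \geq -2\kappa/3$, one obtains
\begin{equation*}
\widetilde{\ELdG}[\vQ] \;\geq\; c_{0}\,\|\vQ\|_{H^{1}(\Omega)}^{2} - C_{0}, \qquad \vQ \in \LdGspace{\vQdir},
\end{equation*}
so any minimizing sequence $\{\vQ_{n}\}$ is uniformly bounded in $H^{1}(\Omega;\symmtraceless) \cap L^{\infty}(\Omega;\symmtraceless)$.

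I then extract a subsequence $\vQ_{n} \rightharpoonup \vQ^{*}$ in $H^{1}$; by Rellich--Kondrachov together with the uniform $L^{\infty}$ bound and interpolation, $\vQ_{n} \to \vQ^{*}$ strongly in $L^{p}$ for every $p<\infty$, and pointwise a.e.\ on a further subsequence. The gradient integrand $\ELdGfunc(\vQ,\nabla\vQ)$ is quadratic in $\nabla\vQ$ with coefficients depending affinely on $\vQ$, and by the pointwise estimate above this form is uniformly positive definite on $\{|\vQ|\leq\sqrt{2/3}\}$, hence convex in $\nabla\vQ$. Classical weak lower semicontinuity for integrands convex in the gradient variable and continuous in the lower-order argument then gives $\iO \ELdGfunc(\vQ^{*},\nabla\vQ^{*})\,d\vx \leq \liminf_n \iO \ELdGfunc(\vQ_{n},\nabla\vQ_{n})\,d\vx$. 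For the bulk term, strict convexity of $\fbulk$ (Corollary \ref{cor:more_results_for_fbulk}) combined with pointwise a.e.\ convergence and Fatou's lemma yields the analogous inequality for $\iO \fbulk(\vQ_{n})\,d\vx$. These together show $\vQ^{*}$ minimizes $\widetilde{\ELdG}$ on $\LdGspace{\vQdir}$.

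To conclude, the strict eigenvalue bound follows because $\widetilde{\ELdG}[\vQ^{*}]<\infty$ implies $\fbulk(\vQ^{*}(\vx))<\infty$ a.e.\, and $\fbulk$ actually diverges as any eigenvalue of $\vQ$ approaches an endpoint of $[-1/3,2/3]$: in that limit every admissible $\rho \in \Admisrho{\vQ}$ is forced to concentrate on a proper subset of $S^{2}$, driving $\entropy{\rho} \to +\infty$. Hence $\lambda_{i}(\vQ^{*}(\vx)) \in (-1/3,2/3)$ strictly a.e. The same proof applies verbatim to the full $\ELdG$ in \eqref{eqn:Landau-deGennes_model_problem}, since the surface term passes to the limit by strong trace convergence and $\LdGrhs$ is a bounded linear functional by hypothesis, so neither destroys coercivity or lower semicontinuity. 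The main obstacle throughout is the cubic $\Li_{*}$ term, which on its own is unbounded below on $H^{1}$; the entire argument hinges on the fact that the singular bulk potential supplies precisely the a.e.\ $L^{\infty}$ bound needed to absorb this term into the positive $\Li_{1}$ contribution.
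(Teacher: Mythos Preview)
Your proposal is correct and follows essentially the same approach as the paper: use the singular potential to force the eigenvalue bounds \eqref{eqn:Q_eigenvalue_bounds} (hence an a.e.\ $L^{\infty}$ bound on $\vQ$), absorb the cubic $\Li_{*}$ term into the $\Li_{1}$ term via the pointwise eigenvalue estimate on $\Li_{1}\vI+\Li_{*}\vQ$, invoke Theorem~\ref{thm:Gartland_coercivity} with $\widetilde{\Li}_{1}$, handle the $\Li_{4}$ term by Young's inequality, and apply the direct method. The paper organizes this slightly differently---it first restricts to the closed convex constrained set $\AdmisLdG=\{\vQ\in\LdGspace{\vQdir}:-1/3\le\lambda_i(\vQ)\le 2/3\}$ and then cites standard calculus of variations, whereas you derive the constraint a posteriori from finite energy and spell out the weak lower semicontinuity and Fatou steps---but the substance is the same.
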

\begin{proof}
When $\Li_{4} = \Li_{*} = 0$, the result follows from \cite[Lem. 4.2, Thm. 4.3]{Davis_SJNA1998}.  Otherwise, consider the case where $\Li_{*}$ is positive (the negative case is similar).  Consider the constrained admissible set
\begin{equation}\label{eqn:exist_min_pf_0}
	\AdmisLdG := \{ \vQ \in \LdGspace{\vQdir} ~ | -1/3 \leq \lambda_{i} (\vQ) \leq 2/3, \text{ for } i = 1, 2, 3 \},
\end{equation}
and note that $\AdmisLdG$ is a closed, convex set.  Since the minimum eigenvalue of $\Li_{1} \vI + \Li_{*} \vQ$ (on $\AdmisLdG$) is $\widetilde{\Li}_{1}$, using Theorem \ref{thm:Gartland_coercivity}, we have that $\widetilde{\ELdG}$ satisfies the bound
\begin{equation}\label{eqn:exist_min_pf_1}
\begin{split}
	\widetilde{\ELdG}[\vQ] &\geq \frac{1}{2} \iO \widetilde{\Li}_{1} |\nabla \vQ|^2 + \Li_{2} (\nabla \cdot \vQ)^2 + \Li_{3} (\nabla \vQ)\tp \trid \nabla \vQ \, d\vx \\
	& + \frac{\Li_{4}}{2} \iO \nabla \vQ \trid (\bflevi \cdot \vQ) \, d\vx + \frac{1}{\bulkeps^2} \iO \Bulkfunc (\vQ) \, d\vx \\
	&\geq \frac{C}{2} \iO |\nabla \vQ|^2 \, d\vx + \frac{\Li_{4}}{2} \iO \nabla \vQ \trid (\bflevi \cdot \vQ) \, d\vx + \frac{1}{\bulkeps^2} \iO \Bulkfunc (\vQ) \, d\vx,
\end{split}
\end{equation}
for all $\vQ \in \AdmisLdG$, for some constant $C > 0$.  Furthermore, one can show
\begin{equation}\label{eqn:exist_min_pf_2}
\begin{split}
\widetilde{\ELdG}[\vQ] &\geq \frac{1}{2} \left( C - \zeta_0 \right) \iO |\nabla \vQ|^2 \, d\vx - \frac{C'}{\zeta_0} \iO |\vQ|^2 \, d\vx + \frac{1}{\bulkeps^2} \iO \Bulkfunc (\vQ) \, d\vx,
\end{split}
\end{equation}
for any $\zeta_0 > 0$ where $C' > 0$ is some bounded constant.  Choosing $\zeta_0 = C / 2$, we get
\begin{equation}\label{eqn:exist_min_pf_3}
\begin{split}
\widetilde{\ELdG}[\vQ] &\geq \frac{C}{4} \iO |\nabla \vQ|^2 \, d\vx + \frac{1}{\bulkeps^2} \iO \widetilde{\Bulkfunc} (\vQ) \, d\vx,
\end{split}
\end{equation}
where $\widetilde{\Bulkfunc} (\vQ) := T \fbulk(\vQ) - (\kappa + 2 \bulkeps^2 C'/C) |\vQ|^2$.  Thus, $\widetilde{\ELdG}[\vQ]$ is clearly bounded below by a coercive energy on $\AdmisLdG$.  By standard calculus of variations \cite{Braides_book2002,Jost_Book1998}, there exists a minimizer, $\widetilde{\vQ}$, of $\widetilde{\ELdG}[\cdot]$ in $\AdmisLdG$.  Moreover, $\fbulk(\widetilde{\vQ}) < \infty$ almost everywhere, meaning the eigenvalues of $\widetilde{\vQ}$ are strictly within the physical limits almost everywhere.  The same holds true for $\ELdG[\cdot]$.
\end{proof}

\subsection{Gradient Flow}
\label{sec:gradient_flow}

We look for an energy minimizer using a gradient flow strategy \cite{Lee_APL2002,Ravnik_LC2009,Zhao_JSC2016,Bajc_JCP2016} applied to the energy \eqref{eqn:Landau-deGennes_model_problem}.  
Let $t$ represent ``time'' and suppose that $\vQ \equiv \vQ(\vx,t)$ satisfies an evolution equation such that $\lim_{t \to \infty} \vQ(\cdot,t) =: \vQ_{*}$ is a local minimizer of $\ELdG$, where $\vQ(\vx,0) = \vQinit$, and $\vQinit \in \LdGspace{\vQdir}$ is the initial guess for the minimizer.  The tensor $\vQ(\cdot,t)$ evolves according to the following $L^2(\Om)$ gradient flow:
\begin{equation}\label{eqn:LdG_L2_grad_flow}
\begin{split}
\ipOm{\partial_{t} \vQ(\cdot,t)}{\vP} = -\delta_{\vQ} \ELdG[\vQ;\vP], \quad \forall \vP \in \LdGspace{\vzero},
\end{split}
\end{equation}
where $\ipOm{\cdot}{\cdot}$ is the $L^2$ inner product over $\Om$. Formally, the solution of \eqref{eqn:LdG_L2_grad_flow} will converge to $\vQ_{*}$.

\begin{remark}\label{rem:tensor-valued_Allen-Cahn}
If $\Bulkfunc$ is the Landau-de Gennes bulk potential in \eqref{eqn:Landau-deGennes_bulk_potential}, then \eqref{eqn:LdG_L2_grad_flow} is essentially a tensor-valued Allen-Cahn equation.  By the standard theory of parabolic PDEs \cite{Friedman_book2008,Krylov_book2008}, it has a unique solution.  The same result also holds when $\Bulkfunc$ is the singular bulk potential.
\end{remark}

We use a numerical scheme for approximating \eqref{eqn:LdG_L2_grad_flow} by first discretizing in time by minimizing movements \cite{DeGiorgi_collection2006}. Let $\vQ_{k} (\vx) \approx \vQ(\vx,k \dt)$, where $\dt > 0$ is a finite time-step, and $k$ is the time index.  Then \eqref{eqn:LdG_L2_grad_flow} becomes a sequence of variational problems.  Given $\vQ_{k}$, find $\vQ_{k+1} \in \LdGspace{\vQdir}$ such that
\begin{equation}\label{eqn:LdG_L2_grad_flow_semi-discrete}
\begin{split}
\ipOm{\frac{\vQ_{k+1} - \vQ_{k}}{\dt}}{\vP} = -\delta_{\vQ} \ELdG[\vQ_{k+1};\vP], \quad \forall \vP \in \LdGspace{\vzero},
\end{split}
\end{equation}
which is equivalent to
\begin{equation}\label{eqn:LdG_L2_grad_flow_semi-discrete_min_movement}
\begin{split}
\vQ_{k+1} = \argmin_{\vQ \in \LdGspace{\vQdir}} F(\vQ), \quad F(\vQ) := \frac{1}{2 \dt} \| \vQ - \vQ_{k} \|_{L^2(\Om)}^2 + \ELdG[\vQ],
\end{split}
\end{equation}
and yields the useful property $F(\vQ_{k+1}) \leq F(\vQ_{k})$.  However, \eqref{eqn:LdG_L2_grad_flow_semi-discrete} is a \emph{fully-implicit} equation and requires an iterative solution because of the non-linearities in $\ELdGfunc(\vQ,\nabla \vQ)$ and $\Bulkfunc(\vQ)$.  As in the Landau-de Gennes case, $\Bulkfunc(\vQ)$ is non-convex \cite{Schimming_PRE2020a}, so that we adopt a \emph{convex splitting} method \cite{Wise_SJNA2009,Zhao_JSC2016,Xu_CMAME2019}.  Setting $\bulkimp (\vQ) = T \fbulk(\vQ)$ and $\bulkexp (\vQ) = \kappa |\vQ|^2$, we see that \eqref{eqn:bulk_pot_explicit} already has the form of a convex splitting:
\begin{equation}\label{eqn:LdG_convex_split}
\begin{split}
\Bulkfunc (\vQ) &\equiv \bulkimp (\vQ) - \bulkexp (\vQ),
\end{split}
\end{equation}
i.e. $\bulkimp$ and $\bulkexp$ are convex functions of $\vQ$.

In computing \eqref{eqn:LdG_L2_grad_flow_semi-discrete}, we treat $\bulkimp$ implicitly and $\bulkexp$ explicitly. Therefore, \eqref{eqn:LdG_L2_grad_flow_semi-discrete} becomes the following.  Given $\vQ_{k}$, find $\vQ_{k+1} \in \LdGspace{\vQdir}$ such that
%
%
\begin{equation}\label{eqn:LdG_L2_grad_flow_semi-implicit}
\begin{split}
&\ipOm{\frac{\vQ_{k+1} - \vQ_{k}}{\dt}}{\vP} + \aform{\vQ_{k+1}}{\vP} + \bform{\vQ_{k+1}}{\vP} + \cform{\vQ_{k+1}}{\vP}{\vQ_{k+1}} \\
&\frac{1}{\bulkeps^2} \iO \frac{\partial \bulkimp (\vQ_{k+1})}{\partial \vQ} \dd \vP \, d\vx + \surfcoef \iG \frac{\partial \LdGsurf(\vQ_{k+1})}{\partial \vQ} \dd \vP \, dS(\vx) \\
&= \frac{1}{\bulkeps^2} \iO \frac{\partial \bulkexp (\vQ_{k})}{\partial \vQ} \dd \vP \, d\vx + \iO \frac{\partial \LdGrhs(\vQ_{k})}{\partial \vQ} \dd \vP \, d\vx, \quad \forall \vP \in \LdGspace{\vzero},
\end{split}
\end{equation}
where the right-hand-side of \eqref{eqn:LdG_L2_grad_flow_semi-implicit} is completely explicit.  We then apply Newton's method to solving \eqref{eqn:LdG_L2_grad_flow_semi-implicit}.

Next, we approximate \eqref{eqn:LdG_L2_grad_flow_semi-implicit} by a finite element method, so we introduce some basic notation and assumptions in that regard.  We assume that $\Om \subset \R^{3}$ is discretized by a conforming shape regular triangulation $\Tk_{h} = \{ T_{i} \}$ consisting of simplices, i.e. we define $\Om_{h} := \cup_{T \in \Tk_{h}} T$.  Furthermore, we define the space of continuous piecewise linear functions on $\Om_{h}$:
\begin{equation}\label{eqn:std_P1_FE_space}
\begin{split}
\M_{h} (\Om_{h}) := \left\{ v \in C^{0} (\Om_{h}) \mid v |_{T} \in \Pk_{1} (T), ~\forall T \in \Tk_{h} \right\},
\end{split}
\end{equation}
where $\Pk_{k} (T)$ is the space of polynomials of degree $\leq k$ on $T$.

We discretize \eqref{eqn:LdG_L2_grad_flow_semi-implicit} by a $\Pk_{1}$ approximation of the $\vQ$ variable denoted $\vQ_{h}$.  To this end, define
\begin{equation}\label{eqn:Q-tensor_FE_space}
\begin{split}
	\Sh (\Om_{h}) := \left\{ \vP \in C^{0} (\Om_{h};\symmtraceless) \mid \vP = \sum_{i=1}^5 q_{i,h} \basis^{i}, ~ q_{i,h} \in \M_{h} (\Om_{h}), 1 \leq i \leq 5 \right\},
\end{split}
\end{equation}
and let $\vQ_{h} \in \Sh (\Om_{h})$.  Thus, $\vQ_{h} = \sum_{i=1}^5 q_{i,h} \basis^{i}$, and $\vQ_{h} \in H^1(\Om; \symmtraceless)$.

The fully discrete $L^2$-gradient flow now follows from \eqref{eqn:LdG_L2_grad_flow_semi-implicit}, which we explicitly state.  Given $\vQ_{h,k}$, find $\vQ_{h,k+1} \in \Sh (\Om_{h}) \cap \LdGspace{\interp \vQdir}$, where $\interp$ denotes the Lagrange interpolation operator, such that
\begin{equation}\label{eqn:LdG_L2_grad_flow_FE_approx}
\begin{split}
&\ipOm{\frac{\vQ_{h,k+1} - \vQ_{h,k}}{\dt}}{\vP} + \aform{\vQ_{h,k+1}}{\vP} + \bform{\vQ_{h,k+1}}{\vP} + \cform{\vQ_{h,k+1}}{\vP}{\vQ_{h,k+1}} \\
&\frac{1}{\bulkeps^2} \iO \interp \left( \frac{\partial \bulkimp (\vQ_{h,k+1})}{\partial \vQ} \right) \dd \vP \, d\vx + \surfcoef \iG \frac{\partial \LdGsurf(\vQ_{h,k+1})}{\partial \vQ} \dd \vP \, dS(\vx) \\
&= \frac{1}{\bulkeps^2} \iO \frac{\partial \bulkexp (\vQ_{h,k})}{\partial \vQ} \dd \vP \, d\vx + \iO \frac{\partial \LdGrhs(\vQ_{h,k})}{\partial \vQ} \dd \vP \, d\vx, \quad \forall \vP \in \Sh (\Om_{h}) \cap \LdGspace{\vzero}.
\end{split}
\end{equation}
We iterate this procedure until some stopping criteria is achieved.  As is the case with the Landau-de Gennes model, solving \eqref{eqn:LdG_L2_grad_flow_FE_approx} at each time-step requires Newton's method, i.e. we must compute the gradient and Hessian of the energy. In particular, we need to compute the gradient and Hessian of the singular bulk potential $\bulkimp \equiv T \fbulk$, as we detail in Section \ref{sec:eval_singular_bulk}.


\section{Evaluating the Singular Bulk Potential}\label{sec:eval_singular_bulk}

As in other self consistent mean field theories, the main difficulty with the
method is that \emph{no explicit formula} for $\fbulk(\vQ)$ is available.
Instead, one has to solve the
mean field self-consistency equation \eqref{eqn:multiplier_equation} numerically
for a given $\vQ$. In the application of Newton's method to the solution of
\eqref{eqn:LdG_L2_grad_flow_FE_approx}, we must compute $\partial \bulkimp (\vQ)
/\partial \vQ$ and $\partial^2 \bulkimp (\vQ) /\partial \vQ^2$, evaluated at the
current guess of the solution $\vQ_{h,k+1}$, where $\bulkimp \equiv T \fbulk$.
Since $\vQ_{h,k+1} = \vQ_{h,k+1}(\vx)$ is spatially varying, this can
potentially be very expensive to compute.  However, note the presence of the
Lagrange interpolation operator $\interp$ in
\eqref{eqn:LdG_L2_grad_flow_FE_approx}, i.e. see the term $\interp \left(
\partial \bulkimp (\vQ_{h,k+1}) /\partial \vQ \right)$.  Hence, $\fbulk$, and
its derivatives, need only be evaluated at the finite element degrees-of-freedom
(or nodes) of the mesh.  Moreover, the computation at each node is completely
independent of all other nodes, i.e. it is \emph{embarrassingly parallel}.
Therefore, the numerical implementation of the singular bulk potential is
completely tractable.

There are two main steps involved in the determination of the bulk potential $\fbulk(\vQ)$ and its derivatives: the calculation of the single particle partition function \eqref{eqn:fbulk_density}, and the solution of the mean field self consistency equation \eqref{eqn:multiplier_equation}.  We establish here some of the properties necessary for their calculation.

\subsection{Differentiability}

We require the gradient and Hessian of $\fbulk(\vQ)$ in solving \eqref{eqn:LdG_L2_grad_flow_FE_approx} via Newton's method.

\begin{prop}\label{prop:gradient_hessian_fbulk}
Given $\vQ \in \symmtraceless$ with eigenvalues that satisfy $-1/3 < \lambda_{i} (\vQ) < 2/3$, let $\vA \in \symmtraceless$ be the unique minimizer of \eqref{eqn:dual_convex_func}.  Then, there holds
\begin{equation}\label{eqn:gradient_hessian_fbulk}
\begin{split}
\fbulk (\vQ) &= \vQ \dd \vA - \ln \partition{\vA} = - \dual{\vA}, \\
\frac{\partial \fbulk (\vQ)}{\partial \vQ} \dd \vP &= \vA \dd \vP, \quad \text{for all } \vP \in \symmtraceless, \\
\vT \dd \frac{\partial^2 \fbulk (\vQ)}{\partial \vQ^2} \dd \vP &= \left( \frac{\partial \vA}{\partial \vQ} \dd \vT \right) \dd \vP, \quad \text{for all } \vP, \vT \in \symmtraceless,
\end{split}
\end{equation}
where $\partial \vA / \partial \vQ \dd \vT$ is the unique solution of the linear system
\begin{equation}\label{eqn:hessian_fbulk_linear_sys}
\begin{split}
\left( \frac{\partial \vA}{\partial \vQ} \dd \vT \right) \dd \frac{\partial^2 \dual{\vA}}{\partial \vA^2} \dd \vP = \vT \dd \vP, \quad \text{for all } \vP \in \symmtraceless,
\end{split}
\end{equation}
for any $\vT \in \symmtraceless$, where $\partial^2 \dual{\vA} / \partial \vA^2$ is the constant 4-tensor evaluated at $\vA$.
\end{prop}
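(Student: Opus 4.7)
The plan is to exploit the dual characterization of $\vA = \vA(\vQ)$ given in Corollary \ref{cor:more_results_for_fbulk}, namely that $\vA$ is the unconstrained minimizer of $\dual{\cdot}$ on $\symmtraceless$, and then apply the envelope theorem together with implicit differentiation of the mean field self-consistency equation \eqref{eqn:multiplier_equation}. First I would prove the closed form for $\fbulk(\vQ)$ by direct substitution of the optimal density $\rho^{*}(\vp) = \exp(\vp\tp \vA \vp)/\partition{\vA}$ from \eqref{eqn:fbulk_density} into \eqref{eqn:fbulk_value}. Since $\ln \rho^{*}(\vp) = \vp\tp \vA \vp - \ln \partition{\vA}$, the entropy splits into two pieces: one is $\iSp \vp\tp \vA \vp \, \rho^{*}(\vp) \dA{\vp}$, which equals $\vA \dd \vQ$ after subtracting and re-adding $(1/3)\vI$ inside the integrand and using $\tr \vA = 0$ together with \eqref{eqn:defn_of_Q}; the other is $-\ln \partition{\vA}$, since $\rho^{*}$ integrates to one. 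This yields $\fbulk(\vQ) = \vQ \dd \vA - \ln \partition{\vA} = -\dual{\vA}$.

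For the gradient, I would differentiate the identity $\fbulk(\vQ) = \vQ \dd \vA(\vQ) - \ln \partition{\vA(\vQ)}$ with respect to $\vQ$ in an arbitrary direction $\vP \in \symmtraceless$, yielding
\begin{equation*}
\frac{\partial \fbulk(\vQ)}{\partial \vQ} \dd \vP = \vA \dd \vP + \left[ \vQ - \frac{1}{\partition{\vA}} \frac{\partial \partition{\vA}}{\partial \vA} \right] \dd \left( \frac{\partial \vA}{\partial \vQ} \dd \vP \right).
\end{equation*}
Because $\vA(\vQ) \in \symmtraceless$ for every $\vQ$, the derivative $\partial \vA/\partial \vQ \dd \vP$ lies in $\symmtraceless$, so the bracketed term, when contracted against it, vanishes by the self-consistency equation \eqref{eqn:multiplier_equation}. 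This is exactly the envelope-theorem cancellation and leaves $\partial \fbulk/\partial \vQ \dd \vP = \vA \dd \vP$.

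For the Hessian, I would observe that the gradient formula already identifies $\partial^{2} \fbulk/\partial \vQ^{2}$ with $\partial \vA/\partial \vQ$, and then compute the latter by implicit differentiation of the first order optimality condition $\partial \dual{\vA(\vQ)}/\partial \vA = \vzero$ (on $\symmtraceless$). Since $\dual{\vA} = \ln \partition{\vA} - \vQ \dd \vA$ is linear in $\vQ$, differentiating in the direction $\vT \in \symmtraceless$ produces
\begin{equation*}
\left( \frac{\partial \vA}{\partial \vQ} \dd \vT \right) \dd \frac{\partial^{2} \dual{\vA}}{\partial \vA^{2}} \dd \vP = \vT \dd \vP \qquad \text{for all } \vP \in \symmtraceless,
\end{equation*}
which is \eqref{eqn:hessian_fbulk_linear_sys}. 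Uniqueness of the solution in $\symmtraceless$ is immediate from Corollary \ref{cor:more_results_for_fbulk}, which asserts that $\partial^{2} \dual{\vA}/\partial \vA^{2}$ is positive definite on the symmetric traceless subspace whenever $|\vA| < \infty$, so the 4-tensor is invertible as a map $\symmtraceless \to \symmtraceless$.

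The main obstacle, and what requires the most care, is maintaining the distinction between the ambient space of symmetric matrices and the symmetric traceless subspace on which the optimization actually takes place: the partition function $\partition{\vA}$ is naturally defined for all symmetric $\vA$, and derivatives such as $\partial \partition{\vA}/\partial \vA$ have a nonzero isotropic part (compare \eqref{eqn:implicit_eqn_lagr_mult_pf_1}), but the Lagrange multiplier, its derivative $\partial \vA/\partial \vQ \dd \vP$, and all admissible test directions $\vP$ must be constrained to $\symmtraceless$. Every cancellation in the argument above relies on the fact that contractions with traceless test matrices kill the spurious isotropic components, exactly as in the proof of Theorem \ref{thm:results_for_fbulk}. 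Once this bookkeeping is handled, invoking the implicit function theorem is routine thanks to the strict convexity of $\dual{\cdot}$.
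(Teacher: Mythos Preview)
Your proposal is correct and follows essentially the same route as the paper: direct substitution of $\rho^{*}$ to evaluate $\fbulk(\vQ)$, differentiation of $\fbulk(\vQ)=\vQ\dd\vA-\ln\partition{\vA}$ with the envelope-theorem cancellation via \eqref{eqn:multiplier_equation} for the gradient, and implicit differentiation of the first-order optimality condition for $\dual{\cdot}$ to obtain \eqref{eqn:hessian_fbulk_linear_sys}. Your added remarks about restricting all test directions to $\symmtraceless$ and invoking the positive definiteness from Corollary~\ref{cor:more_results_for_fbulk} for uniqueness are exactly the points the paper relies on implicitly.
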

\begin{proof}
Let $\rho$ be the probability distribution given by \eqref{eqn:fbulk_density}.  Then, by \eqref{eqn:fbulk_value},
\begin{equation}\label{eqn:gradient_hessian_fbulk_pf_1}
\begin{split}
\fbulk (\vQ) &= \iSp \rho(\vp) \ln \rho (\vp) \dA{\vp} = \iSp \frac{\exp \left( \vp\tp \vA \vp \right)}{\partition{\vA}} \left(  \vp\tp \vA \vp - \ln \partition{\vA} \right) \dA{\vp}, \\
&= \vA \dd \iSp \left( \vp \otimes \vp - \frac{1}{3} \vI \right) \rho(\vp) \dA{\vp} - \ln \partition{\vA} = \vQ \dd \vA - \ln \partition{\vA} = - \dual{\vA},
\end{split}
\end{equation}
where we used \eqref{eqn:defn_of_Q} and \eqref{eqn:dual_convex_func}.  Next, using \eqref{eqn:gradient_hessian_fbulk_pf_1}, 
\begin{equation}\label{eqn:gradient_hessian_fbulk_pf_2}
\begin{split}
\frac{\partial \fbulk (\vQ)}{\partial \vQ} \dd \vP &= \vP \dd \vA + \vQ \dd \vA' - \frac{1}{\partition{\vA}} \frac{\partial \partition{\vA}}{\partial \vA} \dd \vA' = \vP \dd \vA + \vQ \dd \vA' - \left( \vQ + \frac{1}{3} \vI \right) \dd \vA' \\
&= \vP \dd \vA + \vQ \dd \vA' - \vQ \dd \vA' = \vA \dd \vP,
\end{split}
\end{equation}
where $\symmtraceless \ni \vA' \equiv (\partial \vA / \partial \vQ) \dd \vP$ and we used \eqref{eqn:implicit_eqn_lagr_mult_pf_2}.

By differentiating \eqref{eqn:gradient_hessian_fbulk_pf_2}, we clearly get the last line of \eqref{eqn:gradient_hessian_fbulk}.  Thus, we need a characterization of $\partial \vA / \partial \vQ$.  Recall that $\vA$ is the unique minimizer of \eqref{eqn:dual_convex_func}, i.e. $\vA$ satisfies \eqref{eqn:dual_convex_func_pf_1}.  So, we differentiate \eqref{eqn:dual_convex_func_pf_1} with respect to $\vQ$, in the direction $\vT$:
\begin{equation}\label{eqn:gradient_hessian_fbulk_pf_3}
\begin{split}
\left( \frac{\partial \vA}{\partial \vQ} \dd \vT \right) \dd \frac{\partial^2 \dual{\vA}}{\partial \vA^2} \dd \vP = \vT \dd \vP, \quad \text{for all } \vP \in \symmtraceless.
\end{split}
\end{equation}
\end{proof}

\begin{remark}
The 4-tensor $\partial^2 \dual{\vA} / \partial \vA^2$ is positive definite, but the coercivity constant degrades as $|\vA| \to \infty$.
\end{remark}

\subsection{Optimization Procedure}\label{sec:optimization_procedure}

Given $\vQ$, we describe a procedure to obtain the corresponding $\vA = \vA(\vQ)$, as well as its derivative with respect to $\vQ$.

\subsubsection{Solving the Euler-Lagrange Equation}

Recall \eqref{eqn:dual_convex_func} and define the linear form $F_{\vQ}(\vA;\vP)$ to be the first variation of \eqref{eqn:dual_convex_func} with respect to $\vA$:
\begin{equation}\label{eqn:E-L_linear_form}
F_{\vQ}(\vA;\vP) := \frac{1}{\partition{\vA}} \frac{\partial \partition{\vA}}{\partial \vA} \cdot \vP - \vQ \dd \vP.
\end{equation}
Thus, given $\vQ \in \symmtraceless$, we want to find $\vA$ such that $F_{\vQ}(\vA;\vP) = 0$ for all $\vP \in \symmtraceless$.  In other words, we want to find a zero of the non-linear function $F_{\vQ}(\cdot;\vP)$.  Hence, we apply Newton's method. 

For a given $\vA$, define the bilinear form:
\begin{equation}\label{eqn:Newton_bilinear_form}
\begin{split}
m_{\vA} (\vP, \vT) &:= \frac{\partial}{\partial \vA} F_{\vQ}(\vA;\vP) \dd \vT \\
&= \frac{1}{\partition{\vA}} \vP \dd \frac{\partial^2 \partition{\vA}}{\partial \vA^2} \dd \vT - \frac{1}{\partition{\vA}} \left( \frac{\partial \partition{\vA}}{\partial \vA} \dd \vP \right) \frac{1}{\partition{\vA}} \left( \frac{\partial \partition{\vA}}{\partial \vA} \dd \vT \right).
\end{split}
\end{equation}
Then Newton's method is as follows.
\begin{itemize}
\item Initialize. Set $\vA_{0} \in \symmtraceless$ (can take the zero $3 \times 3$ matrix) and set $k = 0$.
\item While not converged, do:
\begin{enumerate}
	\item Solve the following (linear) variational problem.  Find $\delta \vA_{k+1} \in \symmtraceless$ such that
	\begin{equation}\label{eqn:Newton_method_var_problem}
	\begin{split}
	m_{\vA_{k}} (\delta \vA_{k+1}, \vP) = - F_{\vQ}(\vA_{k};\vP), \quad \forall \vP \in \symmtraceless.
	\end{split}
	\end{equation}
	\item Update. Set $\vA_{k+1} := \vA_{k} + \delta \vA_{k+1}$.
	\item If $|\delta \vA_{k+1}|$ is less than some tolerance, then stop.
	\item Else, set $k \leftarrow k + 1$ and return to Step (1).
\end{enumerate}
\end{itemize}

Let $\vA_{*}$ be the solution, i.e. $F_{\vQ}(\vA_{*};\vP) = 0$ for all $\vP \in \symmtraceless$.  Let $\vA_{*}'(\vT) = (\partial \vA / \partial \vQ) \dd \vT$.  We obtain $\vA_{*}'(\vT)$ as the unique solution of the following variational problem.  Find $\vA_{*}'(\vT) \in \symmtraceless$, for every $\vT \in \symmtraceless$, such that
\begin{equation}\label{eqn:deriv_A_var_problem}
\begin{split}
m_{\vA_{*}} (\vA_{*}'(\vT), \vP) = \vT \dd \vP, \quad \forall \vP \in \symmtraceless.
\end{split}
\end{equation}

\subsubsection{Matrix-Vector Form}

Recall the basis $\{ \basis^{k} \}_{k=1}^{5}$ that spans $\symmtraceless$.  We rewrite the Newton method in terms of this basis.

\begin{itemize}
\item Initialize.  Let $\valpha_{0} \in \R^{5}$, with $\valpha_{0} = (\alpha_{0}^{1}, ..., \alpha_{0}^{5})$, such that $\vA_{0} = \sum_{\ell=1}^{5} \alpha_{0}^{\ell} \basis^{\ell}$.  Can simply take $\valpha_{0} = \vzero$.  Set $k = 0$.

\item While not converged, do:
\begin{enumerate}
	\item Compute.  Let $\vb_{k} \in \R^{5}$ such that $b_{k}^{\ell} = - F_{\vQ}(\vA_{k};\basis^{\ell})$ for $\ell = 1, ..., 5$.  Moreover, let $\vH_{k} \in \R^{5 \times 5}$, i.e. $\vH_{k} = [h_{k}^{i j}]_{i,j=1}^{5}$, such that $h_{k}^{i j} = m_{\vA_{k}} (\basis^{i}, \basis^{j})$ for $1 \leq i,j \leq 5$.
	
	\item Solve for $\delta \valpha_{k+1} \in \R^{5}$:
	\begin{equation}\label{eqn:Newton_method_matrix_form}
	\begin{split}
	\vH_{k} (\delta \valpha_{k+1}) = \vb_{k}
	\end{split}
	\end{equation}
	\item Update. Set $\valpha_{k+1} := \valpha_{k} + \delta \valpha_{k+1}$, and define $\vA_{k+1} = \sum_{\ell=1}^{5} \alpha_{k+1}^{\ell} \basis^{\ell}$.
	\item If $|\delta \valpha_{k+1}|$ is less than some tolerance, then stop.
	\item Else, set $k \leftarrow k + 1$ and return to Step (1).
\end{enumerate}
\end{itemize}

Let $\vA_{*}$ be the solution, i.e. $F_{\vQ}(\vA_{*};\vP) = 0$ for all $\vP \in \symmtraceless$.  Let $\vA_{*}'(\basis^{\ell}) = (\partial \vA / \partial \vQ) \dd \basis^{\ell}$, for each $\ell = 1,..., 5$, and let $(\valpha_{*}')^{\ell} \in \R^{5}$ be such that
\begin{equation}\label{eqn:deriv_A_matrix_basis_expansion}
\vA_{*}'(\basis^{\ell}) = \sum_{k=1}^{5} [(\valpha_{*}')^{\ell} \cdot \ve_{k} ] \basis^{k}.
\end{equation}
Next, let $\vH_{*}$ be the Hessian matrix corresponding to $\vA_{*}$.  Then, we obtain $\vA_{*}'(\basis^{\ell})$ by solving for $(\valpha_{*}')^{\ell}$ the equation
\begin{equation}\label{eqn:deriv_A_matrix_form}
\begin{split}
\vH_{*} (\valpha_{*}')^{\ell} = \ve_{\ell}, ~ \text{ for each } \ell = 1, ..., 5.
\end{split}
\end{equation}
Note: this is a one time solve (there is no Newton iteration).

\subsubsection{Computational Issues}

The main difficulty associated with our method is that, contrary to the case of
the Landau-de Gennes energy (or the Oseen-Frank energy in the director
representation of the nematic), the energy density as a function of $\vQ$ is not
known explicitly. As in other self consistent field theories, practical
computation requires a numerical scheme.  In this case, the
approximation involved in the evalulation of the free energy of a configuration
is due to both the Lagrange interpolation operator $\interp$ in
\eqref{eqn:LdG_L2_grad_flow_FE_approx}, and the fact that integrals on the unit
sphere \emph{cannot be computed exactly}. Hence we must address the following
issues.
\begin{itemize}
\item The integrals must be approximated by quadrature, ideally a high order quadrature rule.
\item The strict convexity of the functional in \eqref{eqn:dual_convex_func} degrades as $|\vA|$ becomes large, and this is exactly the situation that arises when the eigenvalues of $\vQ$ approach the physical limits in  \eqref{eqn:Q_eigenvalue_bounds}.  Thus, any inaccuracies in the computations (e.g. the integrals) can adversely affect the convergence of the Newton method.
\item Furthermore, when $\vA$ becomes large, $\exp \left( \vp\tp \vA \vp \right)$ becomes extremely large.  Even though we divide by $\partition{\vA}$, there could be an intermediate overflow result or inaccuracies.
\end{itemize}

Therefore, we introduce the following modifications of the optimization method described earlier.  Before computing any of the above, first do an eigen-decomposition of $\vQ$.  
If the eigenvalues are in the range:
\begin{equation}\label{eqn:Q_eigenvalue_bounds_restricted}
\begin{split}
-\frac{1}{3} + \delta_0 \leq \lambda_{i} (\vQ) \leq \frac{2}{3} - \delta_0, ~\text{ for } i = 1,2,3,
\end{split}
\end{equation}
for some $\delta_0 > 0$, then the plain Newton method above is sufficient (it converges in $\approx 5$ iterations).  Numerical experience indicates that $\delta_0 = 0.05$ to $0.1$ is adequate. If the eigenvalues are outside the range \eqref{eqn:Q_eigenvalue_bounds_restricted}, then one must use a sufficiently accurate quadrature rule to ensure that the system in \eqref{eqn:E-L_linear_form} is accurately approximated.  In addition, a more robust optimization procedure should be used (e.g., the Broyden-Fletcher-Goldfarb-Shanno algorithm with a line search to ensure the objective function decreases) to account for possible numerical sensitivities.  This is not difficult to implement since the problem size is small. However, we have not explored this possibility yet.

Next, as a general concern, the integrals should be computed using a shifting procedure.  For example, consider the computation of
\begin{equation}\label{eqn:exam_integral_calc_pt_1}
\begin{split}
\frac{1}{\partition{\vA}} \frac{\partial \partition{\vA}}{\partial \vA} &= \frac{1}{\iSp \exp \left( \vp\tp \vA \vp \right) \dA{\vp}} \iSp \left( \vp \otimes \vp \right) \exp \left( \vp\tp \vA \vp \right) \dA{\vp},
\end{split}
\end{equation}
for a given $\vA \in \symmtraceless$.  Let $C_{0} = |\vA|$.  Then, \eqref{eqn:exam_integral_calc_pt_1} is equivalent to
\begin{equation}\label{eqn:exam_integral_calc_pt_2}
\begin{split}
\frac{1}{\partition{\vA}} \frac{\partial \partition{\vA}}{\partial \vA} &= \frac{1}{\iSp \exp \left( \vp\tp \vA \vp - C_{0} \right) \dA{\vp}} \iSp \left( \vp \otimes \vp \right) \exp \left( \vp\tp \vA \vp - C_{0} \right) \dA{\vp}.
\end{split}
\end{equation}
The advantage of \eqref{eqn:exam_integral_calc_pt_2} over \eqref{eqn:exam_integral_calc_pt_1} is that, when $|\vA|$ is large, \eqref{eqn:exam_integral_calc_pt_2} will \emph{not} result in an overflow calculation.

Lastly, all the integrals over the unit sphere have been approximated by Lebedev quadrature \cite{Lebedev_DM1999}.  We ascertain the accuracy of the computation in section \ref{sec:accuracy_Newtons_method}. However, we note that they involve uniformly distributed points, so when the probability distribution $\rho$ becomes very localized, the integration may fail. We have not encountered this problem in our numerical results below, but note that it would be possible to use an adaptive quadrature method instead.  Indeed, one could adapt the quadrature rule depending on the performance of the Newton solve, or on how close the eigenvalues are to the physical limits.  

\section{Results}\label{sec:results}

All simulations were implemented using the Matlab/C++ finite element toolbox
FELICITY \cite{Walker_SJSC2018,Walker_FEL2021}.  For all 3-D simulations, we
used the algebraic multi-grid solver (AGMG) \cite{Notay_ETNA2010,Napov_NLAA2011,Napov_SISC2012,Notay_SISC2012} to solve the
linear systems appearing in Newton's method. In 2-D, we simply used the
``backslash'' command in Matlab.  Numerical calculations were performed 
with Matlab version R2017b on a Haswell processor with a base clock of $2.5$ 
Ghz at the Minnesota Supercomputing Institute. Spatially distributed Newton 
iterations were paralleized over $24$ threads (Matlab \texttt{parfor}). Execution timings given below correspond to this configuration.

In our simulations, we chose $\bulkeps = 1$.  We also tested the method with  smaller values of $\bulkeps$, and a finer mesh, and there were no issues.  The number of iterations needed to relax increased roughly proportional to the decrease in $\bulkeps^2$.  But the end result was the same.


\subsection{Accuracy of Newton's Method}
\label{sec:accuracy_Newtons_method}

We first look at the accuracy of Newton's method described in Sec.
\ref{sec:optimization_procedure} to invert the mean field self consistency
relation, and obtain the Lagrange multiplier $\vA(\vQ)$.  To test this, we run
the procedure for various degrees of the Lebedev quadrature. We use in the test
a tensor $\vQ$ with maximum eigenvalue parametrized as $ \max_{i} (\lambda_{i})
=  (2/3) S_{n}$, for a range of $S_{n}$. We have examined the cases $S_{n}=0.1$,
$S_{n}=0.6$, $S_{n} = 0.97$, and $S_{n} = 0.995$, which is the largest value of
$S_{n}$ for which the algorithm converges.  Note that if $S_{n}=-0.5$ or $1$,
the physical limit of the eigenvalues is reached, $\vQ$ is no longer physical,
and the corresponding $\vA$ diverges.  The maximum degree of the Lebedev
quadrature tested is $5810$, and we denote $\vA$ given by quadrature at this
degree by $\vA_{max}$.  Table \ref{table:quadDiffs} summarizes the results in
terms of the maximum component of the difference $\vA_{max} - \vA$ for various
quadrature degrees.  We find that for quadrature degrees below $~500$, the
eigenvalues of $\vQ$ must be relatively small in order to obtain accurate values
of $\vA$. For $\vQ$ with eigenvalues close to the their physical limit, the
Lebedev quadrature degree must be sufficiently high. Depending on the
value of $\kappa / T$, it may be necessary to use larger degrees of quadrature
or more sophisticated methods to find $\vA$, as illustrated in Figure
\ref{fig:1DFreeEnergy} showing the bulk potential, Eq.  
\eqref{eqn:bulk_pot_explicit}, as a function of $S_{n}$.  As $\kappa / T$
increases, and the liquid crystal becomes more ordered, the equilibrium value of
$S_{n}$ increases. 

\begin{table*}
\caption{\label{table:quadDiffs} Maximum component of the difference $\vA_{max} - \vA$, for $\vA$ given by Newton's method for various Lebedev quadrature degrees and $\vQ$ with various eigenvalues parametrized by $S_{n}$. $\vA_{max}$ is given by Newton's method with maximum quadrature degree $5810$.}
\begin{tabular}{c c c c c}
Degree & $S_{n} = 0.1$ & $S_{n} = 0.6$ & $S_{n} = 0.97$ & $S_{n} = 0.995$\\ \hline
$14$ & $0.04$ & $1.4$ & $49.6$ & No Convergence \\
$86$ & $3.2\times10^{-9}$ & $1.4\times10^{-3}$ & $31.2$ & No Convergence \\
$590$ & $1.8\times10^{-15}$ & $6.8\times10^{-14}$ & $7.1\times10^{-3}$ & No Convergence \\
$2030$ & $7.2\times10^{-15}$ & $5.1\times10^{-14}$ & $5.1\times10^{-12}$ & $0.1$ \\
$3470$ & $2.3\times10^{-14}$ & $6.6\times10^{-14}$ & $1.9\times10^{-12}$ & $3.3\times10^{-4}$ \\ \hline
$|\vA_{max}|$ & $0.82$ & $5.1$ & $58.3$ & $347$
\end{tabular}
\end{table*}

\begin{figure}
	\includegraphics[width = \textwidth]{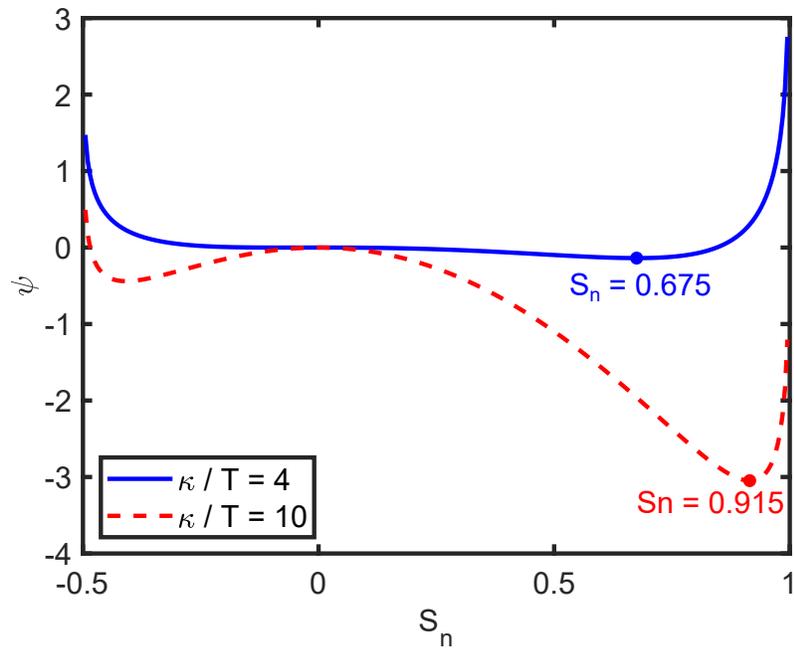}
	\caption{Singular bulk potential, Eq. \eqref{eqn:bulk_pot_explicit}, as a function of $S_{n}$ for $\kappa / T = 4$ and $\kappa / T = 10$.  Dots show the location of the minimum for either case.  As $\kappa / T$ increases, the minimum of the bulk potential approaches the physical limit of $S_{n} = 1$.} 
	\label{fig:1DFreeEnergy}
\end{figure}

\subsection{Boundedness of the bulk potential}

We next examine two spatially nonuniform configurations which are unstable in
the Landau-de Gennes theory when $\Li_{*} \neq 0$, but remain stable when using
the singular bulk potential. In the first example, we choose a weakly perturbed
configuration away from uniform. We take as initial condition a purely uniaxial
configuration defined by $\vQ = S_{n}(x)(\hat{\vn} \otimes \hat{\vn} - 1 / 3
\vI)$ where $\vI$ is the identity matrix, $\hat{\vn} = (0,1,0)$ is fixed, and
$S_{n}(x) = S_0 + \beta \sin{\pi k x}$, where $S_0 = 0.6751$ is chosen so as to
minimize the bulk potential with $\kappa / T = 4$. We set $\beta = 0.1$, $k=10$, $\Li_1 = 1$, $\Li_2 = \Li_3 = \Li_4 = 0$, and $\Li_{*} = 3$. For this ratio of $\kappa/T$, the equilibrium configuration is inside the nematic phase.  Note that the coefficients are just outside the limits stated in Theorem \ref{thm:exist_min}, but the minimizer found appears to be robust.

\begin{figure}
	\includegraphics[width = \textwidth]{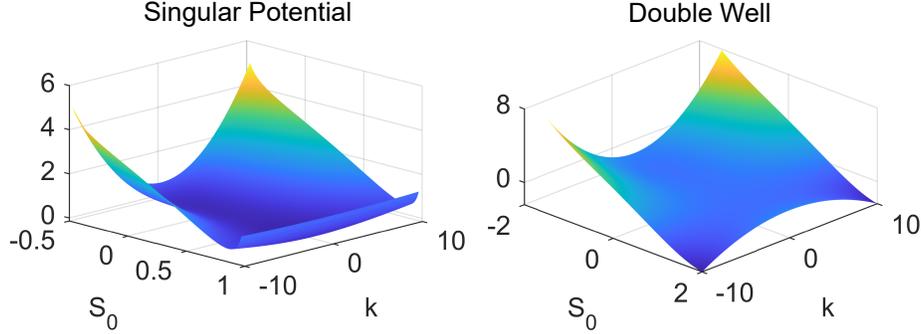}
	\caption{Energy of the perturbed uniform configuration as a function of $S_0$ and $k$ for the Maier-Saupe bulk potential and the double well potential when $L_1 = 1$, $L_2 = L_3 = L_4 = 0$ $L_{*} = 3$.  In the case of the double well, there is a saddle point and unbounded energy, while the Maier-Saupe energy remains bounded below since it diverges if $S \to -0.5$ or $S \to 1$.}
	\label{fig:PertEn}
\end{figure}

We run the gradient flow described in Sec. \ref{sec:gradient_flow} on the components of $\vQ$, decomposed in the basis
\begin{equation} \label{eqn:comp_Basis}
\begin{split}
\vE_1 = 
\begin{pmatrix}
\frac{2}{\sqrt{3}} & 0 & 0 \\
0 & -\frac{1}{\sqrt{3}} & 0 \\
0 & 0 & -\frac{1}{\sqrt{3}}
\end{pmatrix}, \quad
\vE_2 = 
\begin{pmatrix}
0 & 0 & 0 \\
0 & 1 & 0 \\
0 & 0 & -1
\end{pmatrix}, \\
\vE_3 = 
\begin{pmatrix}
0 & 1 & 0 \\
1 & 0 & 0 \\
0 & 0 & 0
\end{pmatrix}, \quad
\vE_4 = 
\begin{pmatrix}
0 & 0 & 1 \\
0 & 0 & 0 \\
1 & 0 & 0
\end{pmatrix}, \quad
\vE_5 = 
\begin{pmatrix}
0 & 0 & 0 \\
0 & 0 & 1 \\
0 & 1 & 0
\end{pmatrix},
\end{split}
\end{equation}
using a square domain defined by $[0,1]^2$ with a body centered mesh with $150 \times 150$ squares ($44701$ vertices), linear basis functions, and a time step in the minimization $\delta t = 4 \times 10^{-3}$. The same parameters are used for both the singular bulk potential and the standard bulk potential of Landau-de Gennes (see \eqref{eqn:Landau-deGennes_bulk_potential}). When $L_{*} \neq 0$, the total LdG energy with standard double well potential is unbounded from below when $k > 0$.  However, the singular bulk potential maintains a bounded free energy for a nonzero range of $L_{*}$ since it diverges outside of the range $-1/2 \leq S_{n} \leq 1$.  This is illustrated in Fig. \ref{fig:PertEn} where we show the total free energy for a set of configurations within a range of $S_0$ and $k$. The (standard) double well energy has a saddle along the line $k=0$ indicating lack of stability for any $k$ and a range of amplitudes $S_{0}$. On the other hand, given the divergence of the Maier-Saupe bulk potential outside the admissible range of eigenvalues of $\vQ$, the free energy computed remains bounded below for all admissible values of $S_{0}$ and $k$.  In fact, the surface plot.

To further illustrate the difference between the two energies, we show in Fig. \ref{fig:PertEvo} the gradient flow of $S_{n}$ during the minimization procedure described in Sec. \ref{sec:gradient_flow}.  The configuration obtained by iterating with the standard Landau-de Gennes double well energy diverges quickly, whereas in the Maier-Saupe case, it simply relaxes to a uniform configuration. 

\begin{figure}
	\includegraphics[width = \textwidth]{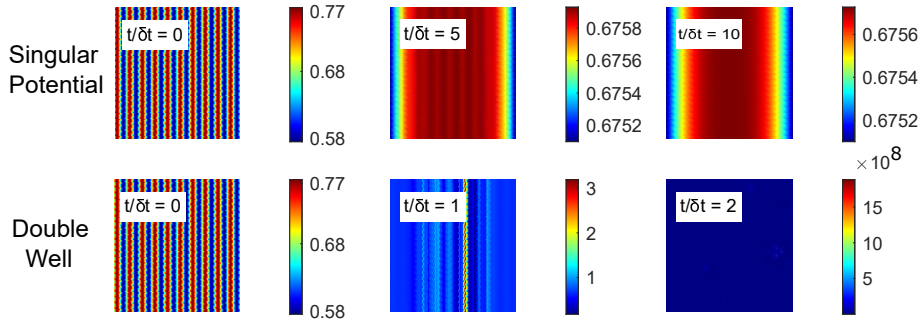}
	\caption{Comparison of the evolution of $S_{n}$ for the perturbed 
uniform configuration given by Maier-Saupe bulk potential, and that given by
the Landau-de Gennes double well potential. We use $k=10$, $\beta = 0.1$, and
$L_{*} = 3$.  The evolution corresponding to the Maier-Saupe bulk potential
relaxes to a uniform configuration, while the configuration evolving under the double well diverges rapidly.}
	\label{fig:PertEvo}
\end{figure}

The second configuration studied is an adaptation of the example from Ball and Majumdar \cite{Ball_MCLC2010} meant to demonstrate the stability of the singular bulk potential.  We consider a cylindrically symmetric initial condition $\vQ = S_{n}(r)(\hat{\vr} \otimes \hat{\vr} - 1 /3 \vI)$ with
\begin{equation}\label{eqn:Ball_Majumdar_exam}
S_{n}(r) = 
\begin{cases}
S_0(2 + \sin{\frac{\pi k r}{5}})            & \quad 0 < r < 5 \\
2 S_0 (2 + \sin{\pi k})(1 - \frac{r}{10})   & \quad 5 < r < 10. 
\end{cases}
\end{equation}
This initial condition is allowed to relax by gradient flow as in Sec. \ref{sec:gradient_flow}.  The value of $S_0 = 0.32$ is chosen so that the eigenvalues of $\vQ$ are close to the physically admissible limit, and $\kappa / T = 3$ so that the bulk potential is minimized for the isotropic phase $S = 0$ \cite{Schimming_PRE2020a}.  We also set $k=5$, $L_1 = 1$, $L_2 = L_3 = L_4 = 0$, and $L_{*} = 3$. A body centered mesh with $150\times150$ squares in a square domain with bounds $[-10,10]^2$, and time step $\delta t = 4\times10^{-3}$ are used.  Each iteration of the gradient flow for this mesh size takes $\sim 30$ CPU minutes to complete.  By direct substitution of Eq.  \eqref{eqn:Ball_Majumdar_exam} into the Landau-de Gennes free energy,  Ball and Majumdar showed that the energy is unbounded below if there is no constraint on the value of $S_0$ when $L_{*} \neq 0$ \cite{Ball_MCLC2010}.  Figure \ref{fig:BallMajumdar} shows several time steps in the in the gradient flow of $S_{n}$ for the initial condition \eqref{eqn:Ball_Majumdar_exam} with both the singular bulk potential and the standard double well Landau-de Gennes potential.  As expected, the flow corresponding to the standard double well potential fails to converge when $L_{*} \neq 0$, whereas the singular bulk potential eventually converges to a configuration with uniform eigenvalues. 

\begin{figure}
	\includegraphics[width = \textwidth]{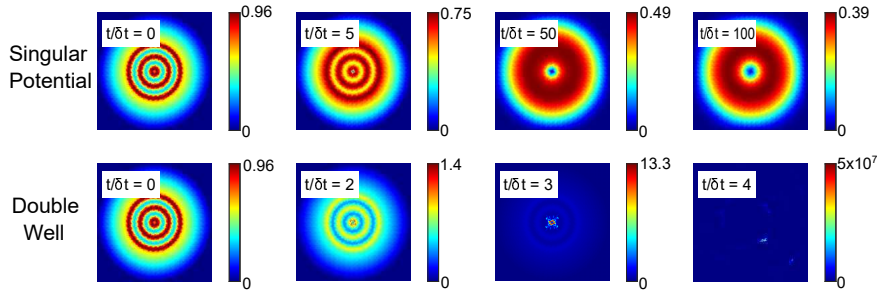}
	\caption{Evolution of $S_{n}$ for the example from Ball and Majumdar \cite{Ball_MCLC2010} for the Maier-Saupe bulk potential and standard bulk potential with $k=5$ and $L_{*} = 3$.  The system with the Maier-Saupe potential eventually relaxes to the isotropic phase, while the double well system diverges rapidly.}
	\label{fig:BallMajumdar}
\end{figure}

\subsection{Three dimensional configurations}

Although the self consistent field theoretic method introduced might appear to lead to a more complex numerical implementation than the Landau-de Gennes theory, we show that even with modest computational resources it is possible to obtain defected configurations in three spatial dimensions. We consider three examples: a $m=+1$ point defect, a line disclination of charge $m=-1/2$, and a Saturn ring loop disclination.  For all calculations we use $\kappa / T = 4$, $L_1 = 1$, $L_2 = L_3 = L_4 = 0$, and $L_{*} = 3$. As above, the equilibrium configuration is in the nematic phase. For the point defect and line disclination, we use a cubic domain with bounds $[-5,5]^3$ with a uniform tetrahedral mesh with $41 \times 41 \times 41$ vertices.  For the Saturn ring we use a cubic domain of size $[-30,30]^3$ with a spherical cavity of radius $7.5$ and a body-centered-cubic (bcc) mesh with $127108$ vertices.  For all computations, we use piecewise linear finite elements and a time step $\delta t = 5\times10^{-2}$.  Iteration is continued until the energy change falls within a tolerance of $10^{-6}$.  For the point defect, Dirichlet boundary conditions on the components of $\vQ$ are used on all sides of the computational domain so as to enforce the topological charge of the defect at the center. For the line disclination, Neumann boundary conditions on the components of $\vQ$ are used on the top and bottom of the computational domain, while Dirichlet conditions are used on all lateral sides to maintain the topological charge of the line at the center of the domain.  For the Saturn ring, Dirichlet boundary conditions fixing a uniform configuration with molecules oriented along the $z$-axis are used on the exterior sides of the domain while Dirichlet conditions are used on the interior sphere to fix a configuration with molecules oriented radially.  

\begin{figure}
	\includegraphics[width = \textwidth]{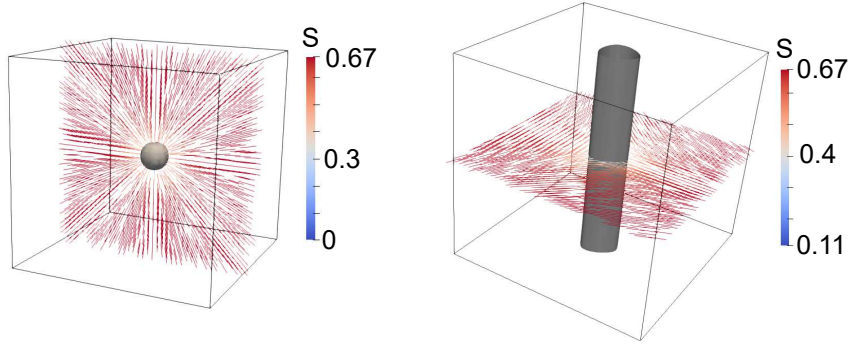}
	\caption{3D visualization of the equilibrium configurations for a $m=+1$ point defect (left) and a $m=-1/2$ line disclination (right).  For both figures, the surface shows all points where $S_{n} = 0.5 S_{max}$ i.e. the ``boundaries'' of the defects.  For both simulations we set $\kappa / T = 4$, $\Li_1 = 1$, $\Li_2 = \Li_3 = \Li_4 = 0$ and $\Li_{*} = 3$.}
	\label{fig:Defects}
\end{figure}

Figure \ref{fig:Defects} shows 3D visualizations of equilibrium configurations for a $m=+1$ point defect and a $m=-1/2$ line disclination. Both simulations reach equilibrium in $\sim12$ CPU hours. The surfaces in both figures show all points where $S_{n} = 0.5 S_{max}$ which we define as the ``boundary'' of the defect.  Note that for the line disclination, the eigenvalue profile is not isotropic due to the inclusion of cubic order terms in the elastic free energy. Also note that the defect core becomes biaxial, that is, the eigenvalues of $\vQ$ become distinct.  These two features are shown clearly in Figure \ref{fig:DiscDist}, which shows a cut in $z=0$ plane of $S_{n}$ along with the molecular orientation probability distribution, $\rho(\vp)$, at various points through the defect.  Far from the defect, the distribution is uniaxial and the corresponding $\vQ$ has two degenerate eigenvalues.  As the core of the defect is approached, the distribution spreads out in the $xy$ plane and the corresponding $\vQ$ has three distinct eigenvalues.  At the center of the defect, the distribution is once again uniaxial but now corresponds to a disk with all orientations in the $xy$ plane equally weighted, which is distinct from the commonly referred to phenomenon of ``escape to the third dimension'' \cite{deGennes_book1995}.  This biaxiality and the anisotropy is consistent with experimental observations in chromonic lyotropic liquid crystals \cite{Zhou_NC2017}, and has been discussed in detail in \cite{Schimming_PRE2020b}.

\begin{figure}
	\includegraphics[width = \textwidth]{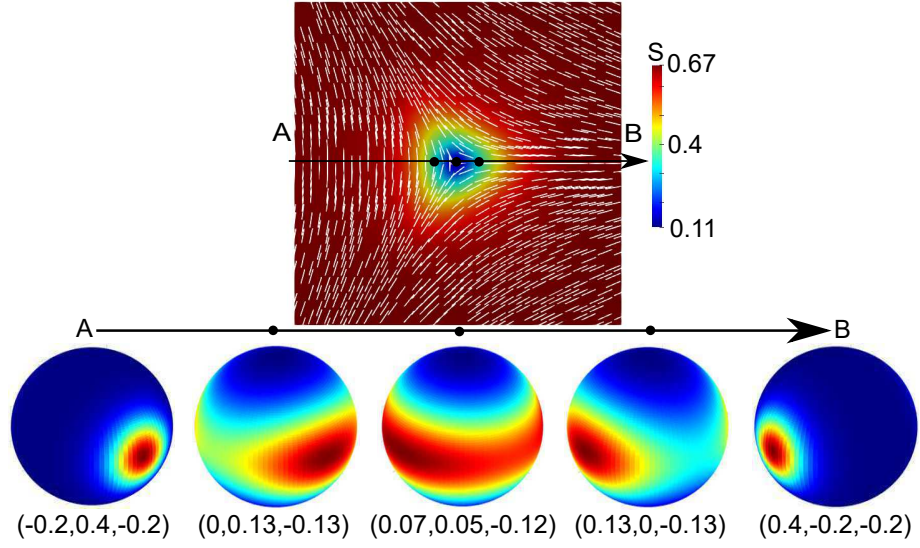}
	\caption{Cut of the equilibrium configuration of $S_{n}$ in the $z=0$ plane for a disclination line with plots of the molecular orientation probability distribution, $\rho(\vp)$, at various points through the disclination.  The distribution becomes biaxial as the core of the defect is approached. At the center of the core all orientations in the $xy$ plane are equally probable. The eigenvalues of $\vQ$ are listed below the corresponding distribution as $(\lambda_{1},\lambda_{2},\lambda_{3})$.}
	\label{fig:DiscDist}
\end{figure}

Finally, Figure \ref{fig:SatRing} shows a 3D visualization and a cut through the $x=0$ plane of a Saturn ring loop disclination around a particle with homeotropic anchoring.  As before, the surface shows all points on the ``boundary'' of the defect while the cut shows the value of $S_{n}$. The shape of the defect and the director field with a characteristic $m=-1/2$ charge are consistent with previous investigations of the Saturn ring \cite{Alama_PRE2016,Gu_PRL2000}.  The Saturn ring configuration takes $\sim84$ CPU hours to reach equilibrium.  This is longer than the other 3D simualations because the mesh is larger and the system requires more iterations to reach equilibrium.

\begin{figure}
	\includegraphics[width = \textwidth]{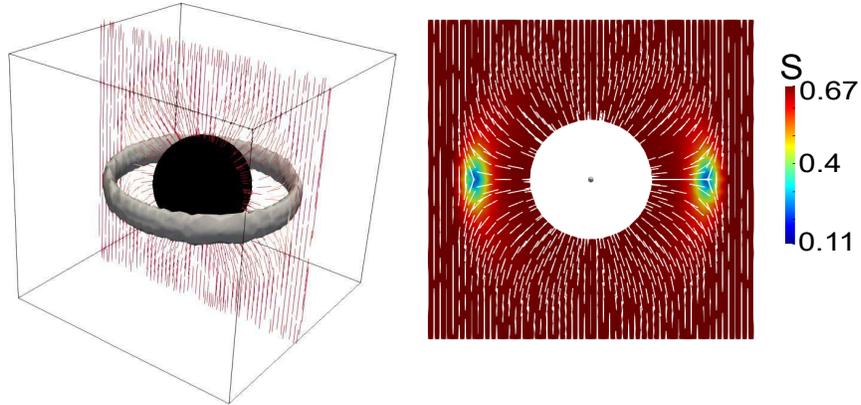}
	\caption{3D visualization of the equilibrium configuration for a Saturn ring loop disclination (left) and plot of $S_{n}$ in the $x=0$ plane for the same configuration (right).  The surface shows all points where $S_{n} = 0.5 S_{max}$ i.e. the ``boundary'' of the defect.  For the simulation we set $\kappa / T = 4$, $L_1 = 1$, $L_2 = L_3 = L_4 = 0$ and $L_{*} =  3$.}
	\label{fig:SatRing}
\end{figure}

\section{Conclusions and discussion} \label{sec:conclusion}

The analysis of a new computational method to obtain equilibrium configurations of a nematic liquid crystal has been presented. The method, based on the Ball-Majumdar singular bulk potential, can overcome known limitations of the Landau-de Gennes theory in the case of elastically anisotropic media. We present selected numerical results demonstrating the convergence of the method in cases in which the Landau-de Gennes theory fails, and a study of prototypical three dimensional configurations that include both point and line topological defects. The code developed has been incorporated into the FELICITY finite element framework in order to facilitate adoption \cite{Walker_SJSC2018,Walker_FEL2021}.

The results shown have been obtained with a particular microscopic interaction model: the Maier-Saupe contact potential. This is a simple case to study since the resulting interaction energy is simply quadratic in $\vQ$. However, the extension to more complex interaction energies is possible assuming one knows their explicit functional representation in terms of the mesoscale $\vQ$. The eigenvalue constraint as introduced is captured in the entropy functional, which only depends on the definition of $\vQ$, and hence is independent of the form of the interaction energy functional. Of course, the limitation of a mean field approximation remains as long as the energy only depends on the local statistical average of the molecular orientation $\vp$. This is not expected to be a serious shortcoming as long as thermal fluctuations are negligible. This is the case in the majority of contemporary studies that focus on systems deep inside the nematic phase.

We have restricted our analysis to finding minimizers of the free energy functional, but the method can be readily extended to studies of nematodynamics (including hydrodynamic interactions). One simply needs to replace the Landau-de Gennes functional by the free energy computed from the singular potential. While there is an additional computational cost involved, it is not severe as demonstrated by our calculations of defect configurations in three dimensions, as long as one takes advantage of parallelism. The method can be therefore applied to the study of the temporal evolution of elastically anisotropic systems, including mass flows. Such a capability should be specially relevant to studies of nematic active matter in which the length of the molecular constituents, and the dependence of their elastic constants on the Debye length when charged, leads to strong elastic anisotropy.


%




\bibliography{MasterBibTeX}

\end{document}